\newtheorem{thm}{Theorem}
\newtheorem{cor}[thm]{Corollary}
\newtheorem{lem}[thm]{Lemma}
\newtheorem{proof}[thm]{proof}
\newtheorem{defn}[thm]{Definition}
\newtheorem{rem}[thm]{Remark}
\newtheorem{exam}[thm]{Example}
\begin{document}

\title{On Sequential Locally Repairable Codes}

\author{Wentu Song, ~ Kai Cai ~ and ~ Chau Yuen
\thanks{W. Song and C. Yuen are with Singapore University of
Technology and Design, Singapore
       (e-mails: \{wentu\_song, yuenchau\}@sutd.edu.sg).}
\thanks{Kai Cai is with the Department of Mathematics, University of Hong
Kong. (e-mail: kcai@hku.hk).} }

\maketitle

\begin{abstract}
We consider the locally repairable codes (LRC), aiming at
sequential recovering multiple erasures. We define the $(n, k, r,
t)$-SLRC (Sequential Locally Repairable Codes) as an $[n,k]$
linear code where any $t'(\leq t)$ erasures can be sequentially
recovered, each one by $r~(2\leq r<k)$ other code symbols.
Sequential recovering means that the erased symbols are recovered
one by one, and {\em an already recovered symbol can be used} to
recover the remaining erased symbols. This important recovering
method, in contrast with the vastly studied parallel recovering,
is currently far from understanding, say, lacking codes
constructed for arbitrary $t\geq 3$ erasures and bounds to
evaluate the performance of such codes.

We first derive a tight upper bound on the code rate of $(n, k, r,
t)$-SLRC for $t=3$ and $r\geq 2$. We then propose two
constructions of binary $(n, k, r, t)$-SLRCs for general $r,t\geq
2$ (Existing constructions are dealing with $t\leq 7$ erasures).
The first construction generalizes the method of direct product
construction. The second construction is based on the resolvable
configurations and yields SLRCs for any $r\geq 2$ and odd $t\geq
3$. For both constructions, the rates are optimal for
$t\in\{2,3\}$ and are higher than most of the existing LRC
families for arbitrary $t\geq 4$.
\end{abstract}

\begin{IEEEkeywords}
Distributed storage, locally repairable codes, parallel recovery,
sequential recovery.
\end{IEEEkeywords}

\IEEEpeerreviewmaketitle \setcounter{equation}{0}

\section{Introduction}
To avoid the inefficiency of straightforward replication of data,
various coding techniques are introduced to the distributed
storage system (DSS), among which the linear locally repairable
codes, also known as locally recoverable codes (LRC)
\cite{Gopalan12}, \cite{Oggier11}, attracted much attention
recently. Roughly speaking, a linear LRC with locality $r$ is an
$[n,k]$ linear code such that the value of each coordinate (code
symbol) can be computed from the values of at most $r$ other
coordinates.

In a DSS system where a LRC $\mathcal C$ is used, the information
stored in each storage node corresponds to one coordinate of
$\mathcal C$. Hence, each {\em single} node failure (erasure) can
be recovered by a set of at most $r$ other nodes. However, it is
very common that two or more storage nodes fail in the system.
This problem, which has become a central focus for the LRC
society, are recently investigated by many authors $($e.g.
\cite{Pamies13}$-$\cite{Balaji16-2}$)$. Basically, when multiple
erasures occur, the recovering performance can be heavily depends
on the recovering strategy in use, say, recovering the erasures
simultaneously or one by one. The two strategies were first
distinguished as \emph{parallel approach} and \emph{sequential
approach} in \cite{Prakash-14}. Comparing with the parallel
approach, the sequential approach recovery erasures one by one and
hence the already fixed erasure nodes can be used in the next
round of recovering. Potentially, for the same LRC, using the
sequential approach can fix more erasures than using the parallel
approach, and hence the sequential approach is a better candidate
than the parallel approach in practice. However, due to technique
difficulties, this more important approach remains far from
understood, say, lacking of both code constructions and bounds to
evaluate the code performance. In contrast with the vastly studied
parallel approach \cite{Pamies13}$-$\cite{Wang16}, existing work
on the sequential approach up to date are limited to dealing with
$t\leq7$ erasures.
For example, the case of $t=2$ are
considered in \cite{Prakash-14}, where the authors derived upper
bounds on the code rate as well as minimum distance and also
constructed a family of distance-optimal codes based on Tur\'{a}n
graphs. For the code rate, they proved that:
\begin{align}\label{rate-bd-3}
\frac{k}{n}\leq \frac{r}{r+2}.
\end{align}
The original version of this work \cite{Wentu-15}, firstly
considered the case of $t=3$ and gave both constructions and code
rate bounds for $t\in\{2,3\}~($in a more generalized manner of
functional recovering$)$. Of great relevance to the present work
are the results recently obtained in \cite{Balaji16} and
\cite{Balaji16-2}, where the authors derived a lower bound on code
length $n$ of \emph{binary code} for $t=3$ and an upper bound on
code rate of \emph{binary code} for $t=4$. A couple of optimal or
high rate constructions were provided in these two papers, say,
rate-optimal codes for $t\in\{2,3,4\}$, and high rate codes for
$r=2$ and $t\in\{5,6,7\}$. Here, we note that, by using orthogonal
Latin squares, the authors in \cite{Balaji16} gave an interesting
construction of sequential locally recoverable codes for {\em any
odd $t\geq 3$} with rate
$k/n=1/\left(1+\frac{t-1}{r}+\frac{1}{r^2} \right)$. Obviously,
the SLRCs can deal with any $t$ erasures and having high code rate
and are highly desired in both theory and practices.

\subsection{Our Contribution}
In practice, high rate LRCs are desired since they mean low
storage overhead. In this work, we are interested in the high rate
LRCs for sequential recovering any $t\geq 3$ erasures, by defining
the $(n,k,r,t)$-SLRC (Sequential Locally Repairable Code) as an
$[n, k]$ linear code in which any $t'~(t'\leq t)$ erased code
symbols can be sequentially recovered, {\em each one} by at most
$r~(2\leq r<k)$ other symbols. Our first contribution is an upper
bound on the code rate for $(n,k,r,t)$-SLRC with $t=3$ and any
$k>r\geq 2$. The bound is derived by using a graph theoretical
method, say, we associate each $(n,k,r,t)$-SLRC with a set of
directed acyclic graphs, called repair graphs, and then obtain the
bound by studying the structural properties of the so-called {\em
minimal repair graph}. The sprit of this method lies in
\cite{Fragouli06, Wentu11}. For general $t\geq 5$, deriving an
achievable, explicit upper bound of the rate of $(n,k,r,t)$-SLRC
seems very challenging, and we give some discussions and
conjectures on this issue.

Then we construct two families of binary $(n,k,r,t)$-SLRC. The
first family, which contains the product of $m$ copies of the
binary $[r+1, r]$ single-parity code \cite{Tamo14} as a special
case, is for any positive integers $r~(\geq 2)$ and $t$, and has
rate
$$\frac{k}{n}=\frac{1}{\sum_{s=0}^t\frac{1}{r^{|\text{supp}_m(s)|}}},$$
where $m$ is any given positive integer satisfying $t\leq 2^m-1$
and $\text{supp}_m(s)$ is the support of the $m$-digit binary
representation\footnote{The $m$-digit binary representation of any
positive integer $s\leq 2^m-1$ is the binary vector
$(\lambda_m,\lambda_{m-1},\cdots,\lambda_1)\in\mathbb Z_2^m$ such
that $s=\sum_{j=1}^m\lambda_j2^{j-1}$.} of $s$. The second family
is constructed for any $r\geq 2$ and any odd integer $t\geq 3$ and
is based on {\em resolvable configurations}. This family has code
rate
$$\frac{k}{n}=\left(1+\frac{t-1}{r}+\left\lceil\frac{1}{r^2}
\right\rceil\right)^{-1},$$ which is the same with the Latin
square-based code constructed in \cite{Balaji16}. For
$t\in\{2,3\}$, the code rates of these two constructions are
optimal.

A basic and important fact revealed by our study is: the
sequential approach can have much better performance than parallel
approach, e.g, for the direct product of $m$ copies of the binary
$[r+1,r]$ single-parity code, it can recover $m$ erasures with
locality $r$ by the parallel approach \cite{Tamo14}, but $2^m-1$
erasures with the same locality by the sequential approach.

\subsection{Related Work}
Except that mentioned previously, most existing work focus on
$[n,k]$ linear LRCs with parallel approach. In \cite{Prakash12},
the authors defined and constructed the {\em $(r,t+1)_a$ code},
for which each code symbol $i$ is contained in a punctured code
(local code) with length $\leq r+t$ and minimum distance $\geq
t+1$. Clearly, for such codes, any $t$ erased code symbols can be
recovered in parallel by at most $tr$ other code symbols, among
which, each erased symbol can be recovered by at most $r$ symbols.
The code rate of this family satisfies \cite{Wentu14}
\begin{align}\label{rate-bd-1}\frac{k}{n}\leq
\frac{r}{r+t}.\end{align} Another family is the codes with
locality $r$ and availability $t$ \cite{Wang14,Rawat14}, for
which, each code symbol has $t$ disjoint recovering sets of size
at most $r$. An upper bound on the code rate of such codes is
proved in \cite{Tamo14}:
\begin{align}\label{rate-bd-2}\frac{k}{n}\leq
\frac{1}{\prod_{j=1}^{t}(1+\frac{1}{jr})}.\end{align}
Unfortunately, for $t\geq 3$, the tightness of bound
\eqref{rate-bd-2} is not known and most of existing construction
have rate $\leq\frac{r}{r+t}~($e.g., see
\cite{Wang15,Huang15,Wang16}$)$. Constructions with rate
$>\frac{r}{r+t}$ are proposed only for some very special values,
e.g., $(n,k,r,t)=(2^{r+1}-1,2^{r}-1,r,r+1)$ \cite{Wang16}. The
third family of parallel recovery LRC is proposed in
\cite{Pamies13}, in which, for any set $E\subseteq[n]$ of erasures
of size at most $t$ and any $i\in E$, the $i$th code symbol has a
recovering set of size at most $r$ contained in $[n]\backslash E$.
The fourth family, called codes with cooperative local repair, is
proposed in \cite{Rawat-14} and defined by a stronger condition:
each subset of $t$ code symbols can be {\em cooperatively}
recovered from at most $r$ other code symbols. For this family, an
upper bound of the code rate with exactly the same form as
\eqref{rate-bd-1} is derived \cite{Rawat-14}. By far, constructing
LRCs with high code rate $($e.g., $\frac{k}{n}>\frac{r}{r+t})$ is
still an interesting open problem, both for parallel recovery and
for sequential recovery.

\subsection{Organization}
The rest of this paper is organized as follows. In Section
\uppercase\expandafter{\romannumeral 2}, we define the
$(n,k,r,t)$-SLRC and then present some basic and useful facts. In
section \uppercase\expandafter{\romannumeral 3}, we first
investigate the (minimal) repair graphs of the SLRC and then prove
the upper bound on the code rate of $(n,k,r,t)$-SLRC for
$t\in\{2,3\}$. Before constructing the first family of SLRC in
Section \uppercase\expandafter{\romannumeral 5}, we first study an
example in Section \uppercase\expandafter{\romannumeral 4}. Then,
the second family of SLRC is constructed in Section
\uppercase\expandafter{\romannumeral 6}. Finally, the paper is
concluded in Section \uppercase\expandafter{\romannumeral 7}.

\subsection{Notations}
For any positive integer $n$, $[n]:=\{1,2,\cdots,n\}$. For any set
$A$, $|A|$ is the size (the number of elements) of $A$. If
$B\subseteq A$ and $|B|=t$, then $B$ is called a $t$-subset of
$A$. For any real number $x$, $\lceil x\rceil$ is the smallest
integer greater than or equal to $x$. If $\mathcal C$ is an
$[n,k]$ linear code and $A\subseteq[n]$, then $\mathcal C|_{A}$
denotes the punctured code by puncturing coordinates in
$\overline{A}:=[n]\backslash A$. For any codeword
$x=(x_1,x_2,\cdots,x_n)\in\mathcal C$, $\text{supp}(x):=\{i\in[n];
x_i\neq 0\}$ is the support of $x$.

\section{Preliminary}

\subsection{Sequential Locally repairable code (SLRC)}
Let $\mathcal C$ be an $[n,k]$ linear code over the finite field
$\mathbb F$ and $i\in[n]$. A subset $R\subseteq[n]\backslash\{i\}$
is called a \emph{recovering set} of $i$ if there exists an
$a_j\in\mathbb F\backslash\{0\}$ for each $j\in R$ such that
$x_i=\sum_{j\in R}a_jx_j$ for all
$x=(x_1,x_2,\cdots,x_n)\in\mathcal C$. Equivalently, there exists
a codeword $y$ in the dual code $C^\bot$ such that
$\text{supp}(y)=R\cup\{i\}$.

\begin{defn}[Sequential Locally Repairable Code]\label{e-lrc}
For any $E\subseteq[n]$, $\mathcal C$ is said to be
$(E,r)$-recoverable if $E$ can be sequentially indexed, say
$E=\{i_1,i_2,\cdots,i_{|E|}\}$, such that each $i_\ell\in E$ has a
recovering set
$R_{\ell}\subseteq\overline{E}\cup\{i_1,\cdots,i_{\ell-1}\}$ of
size $|R_\ell|\leq r$, where $\overline{E}:=[n]\backslash E$;
$\mathcal C$ is called an $(n,k,r,t)$-\emph{sequential locally
repairable code (SLRC)} (or simply $(r,t)$-SLRC) if $\mathcal C$
is $(E,r)$-recoverable for each $E\subseteq[n]$ of size $|E|\leq
t$, where $r$ is called the locality of $\mathcal C$.
\end{defn}

As a special case of Definition \ref{e-lrc}, if for each
$E\subseteq[n]$ of size $|E|\leq t$ and each $i\in E$, $i$ has a
recovering set $R\subseteq\overline{E}$ of size $|R|\leq r$, then
$\mathcal C$ is called an $(n,k,r,t)$-\emph{parallel locally
repairable code (PLRC)}. This special case is first considered in
\cite{Pamies13}.

By the definition, we can have $r\leq k$ for any $(n,k,r,t)$-SLRC.
{\em Throughout this paper, we assume that a recovering set $R$
has size  $2\leq |R|\leq r<k$}. The following equivalent form of
Definition \ref{e-lrc} will be frequently used in our paper.
\begin{lem}\label{lem-ELRC}
$\mathcal C$ is an $(n,k,r,t)$-SLRC if and only if for any
nonempty $E\subseteq[n]$ of size $|E|\leq t$, there exists an
$i\in E$ such that $i$ has a recovering set
$R\subseteq[n]\backslash E$.
\end{lem}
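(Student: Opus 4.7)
The plan is to prove the two directions of the equivalence separately. The ``only if'' direction is immediate from the definition, while the ``if'' direction follows by a short induction on $|E|$; the main thing to watch is the bookkeeping that identifies the allowable recovery sets before and after deleting one element.

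For the forward direction, I would start from the assumption that $\mathcal C$ is an $(n,k,r,t)$-SLRC. By Definition \ref{e-lrc}, any $E\subseteq[n]$ with $|E|\leq t$ admits an indexing $E=\{i_1,\ldots,i_{|E|}\}$ together with recovering sets $R_\ell\subseteq\overline{E}\cup\{i_1,\ldots,i_{\ell-1}\}$ of size at most $r$. Specializing to $\ell=1$ gives $R_1\subseteq\overline{E}=[n]\setminus E$, so $i_1\in E$ is the element required by the lemma.

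For the reverse direction, I would induct on $s=|E|$. The base case $s=1$ is nothing more than the hypothesis: the unique element of $E$ already has a recovering set inside $\overline{E}$, so $E=\{i_1\}$ is trivially a valid sequential indexing. For the inductive step with $1<s\leq t$, apply the hypothesis to $E$ itself to obtain some $i\in E$ with a recovering set $R\subseteq\overline{E}$ of size at most $r$, and set $i_1:=i$ and $R_1:=R$. Now consider $E':=E\setminus\{i_1\}$, which has size $s-1$ and therefore also at most $t$; applying the induction hypothesis to $E'$ yields an indexing $E'=\{j_1,\ldots,j_{s-1}\}$ with recovering sets $R'_\ell\subseteq\overline{E'}\cup\{j_1,\ldots,j_{\ell-1}\}$, each of size at most $r$. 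Then define $i_{\ell+1}:=j_\ell$ and $R_{\ell+1}:=R'_\ell$ for $\ell=1,\ldots,s-1$.

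The only verification to make—and the step I expect to be the main (mild) obstacle—is the set identity
\[
\overline{E'}\cup\{j_1,\ldots,j_{\ell-1}\}
=\overline{E}\cup\{i_1\}\cup\{i_2,\ldots,i_{\ell}\}
=\overline{E}\cup\{i_1,\ldots,i_{\ell}\},
\]
which uses only $\overline{E'}=\overline{E}\cup\{i_1\}$ and the relabeling $i_{\ell+1}=j_\ell$. This shows $R_{\ell+1}\subseteq\overline{E}\cup\{i_1,\ldots,i_\ell\}$ with $|R_{\ell+1}|\leq r$, which is exactly the sequential-recovery condition for position $\ell+1$. Since $E$ was an arbitrary nonempty subset of $[n]$ of size at most $t$, this establishes that $\mathcal C$ is an $(n,k,r,t)$-SLRC, completing the equivalence.
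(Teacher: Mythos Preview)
Your proof is correct and follows essentially the same approach as the paper: the forward direction is identical, and for the converse the paper iteratively peels off $i_1,i_2,\ldots$ by repeatedly applying the hypothesis to the shrinking erasure set, which is exactly your induction on $|E|$ unrolled. The set identity you highlight is the same bookkeeping the paper does when noting $[n]\setminus(E\setminus\{i_1\})=\overline{E}\cup\{i_1\}$.
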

\begin{proof}
Let $\mathcal C$ be an $(n,k,r,t)$-SLRC and $\emptyset\neq
E\subseteq[n]$ of size $|E|\leq t$. Then by Definition
\ref{e-lrc}, $E$ can be sequentially indexed as
$E=\{i_1,i_2,\cdots,i_{|E|}\}$ such that $i_1$ has a recovering
set $R_1\subseteq[n]\backslash E$.

Conversely, for any $E\subseteq[n]$ of size $|E|\leq t$, by
assumption, one can find an $i_1\in E$ such that $i_1$ has a
recovering set $R_1\subseteq[n]\backslash E$. Further, since
$|E\backslash\{i_1\}|<|E|\leq t$, then by assumption, there exists
an $i_2\in E\backslash\{i_1\}$ such that $i_2$ has a recovering
set $R_2\subseteq[n]\backslash
\left(E\backslash\{i_1\}\right)=\overline{E}\cup\{i_1\}$.
Similarly, we can find an $i_3\in E\backslash\{i_1,i_2\}$ such
that $i_3$ has a recovering set
$R_3\subseteq\overline{E}\cup\{i_1,i_2\}$, and so on. Then $E$ can
be sequentially indexed as $E=\{i_1,i_2,\cdots,i_{|E|}\}$ such
that each $i_\ell\in E$ has a recovering set
$R_\ell\subseteq\overline{E}\cup\{i_1,\cdots,i_{\ell-1}\}$. So by
definition \ref{e-lrc}, $\mathcal C$ is an $(n,k,r,t)$-SLRC.
\end{proof}

The following lemma gives a sufficient condition of $(r,t)$-SLRC,
which reflects the difference between the sequential recovery and the
parallel recovery.
\begin{lem}\label{lem-Com-LRC}
Suppose $[n]=A\cup B$ and $A\cap B=\emptyset$. Suppose
$t_1,t_2\geq 0$ and $\mathcal C$ is an $[n,k]$ linear code such
that
\begin{itemize}
 \item [(1)] For any nonempty $E\subseteq A$ of size $|E|\leq t_1$,
 there exists an $i\in E$ such that $i$ has a recovering
 set $R\subseteq A\backslash E$;
 \item [(2)] For any nonempty $E\subseteq A$ of size $|E|\leq t_1+t_2+1$,
 there exists an $i\in E$ such that $i$ has a recovering
 set $R\subseteq[n]\backslash E$;
 \item [(3)] For any nonempty $E\subseteq B$ of size $|E|\leq t_2$,
 there exists an $i\in E$ such that $i$ has a recovering
 set $R\subseteq B\backslash E$;
 \item [(4)] For any nonempty $E\subseteq B$ of size $|E|\leq t_1+t_2+1$,
 there exists an $i\in E$
 such that $i$ has a recovering set $R\subseteq[n]\backslash E$.
\end{itemize}
Then $\mathcal C$ is an $(r,t)$-SLRC with $t=t_1+t_2+1$.
\end{lem}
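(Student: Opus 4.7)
The plan is to invoke Lemma~\ref{lem-ELRC}: it suffices to show that for every nonempty $E\subseteq[n]$ with $|E|\leq t_1+t_2+1$, some index $i\in E$ admits a recovering set $R\subseteq[n]\backslash E$. Since $[n]=A\cup B$ and $A\cap B=\emptyset$, I would write $E_A=E\cap A$ and $E_B=E\cap B$, so that $E=E_A\cup E_B$ (disjoint union) and $|E_A|+|E_B|\leq t_1+t_2+1$. Then I would split into three cases according to which of $E_A$, $E_B$ is empty.

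The two ``pure'' cases are immediate. If $E_A=\emptyset$ then $E=E_B\subseteq B$ with $|E|\leq t_1+t_2+1$, and condition~(4) directly produces the desired $i$ and $R$; symmetrically, if $E_B=\emptyset$ then $E\subseteq A$ and condition~(2) applies. The ``mixed'' case, when both $E_A$ and $E_B$ are nonempty, rests on the pigeonhole observation that at least one of the inequalities $|E_A|\leq t_1$ or $|E_B|\leq t_2$ must hold; otherwise $|E|=|E_A|+|E_B|\geq(t_1+1)+(t_2+1)=t_1+t_2+2$, contradicting the size bound on $E$. Without loss of generality assume $|E_A|\leq t_1$ (the other case is symmetric and uses condition~(3) in place of~(1)). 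Applying condition~(1) to the nonempty subset $E_A\subseteq A$ yields some $i\in E_A\subseteq E$ together with a recovering set $R\subseteq A\backslash E_A$. Since $R\subseteq A$ and $E_B\subseteq B$ while $A\cap B=\emptyset$, the set $R$ is automatically disjoint from $E_B$ as well, so $R\subseteq[n]\backslash(E_A\cup E_B)=[n]\backslash E$, which is exactly what Lemma~\ref{lem-ELRC} requires.

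I do not expect a serious obstacle: the argument is a short case analysis, and the only substantive point is recognising why the hypotheses are split into the two flavours~(1),(3) versus~(2),(4). The stronger conditions~(1) and~(3) provide recovering sets that lie entirely inside one of the two blocks $A$ or $B$, which is precisely the extra information needed to ignore the opposite block of $E$ in the mixed case; the weaker conditions~(2) and~(4) are only invoked when $E$ lies in a single block, where the ``ignoring'' step is trivial. Once this dichotomy is in view, the proof essentially writes itself.
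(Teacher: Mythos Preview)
Your proof is correct and follows essentially the same approach as the paper: invoke Lemma~\ref{lem-ELRC}, dispose of the pure cases $E\subseteq A$ and $E\subseteq B$ via conditions~(2) and~(4), and in the mixed case use the pigeonhole observation that $|E\cap A|\leq t_1$ or $|E\cap B|\leq t_2$ to apply condition~(1) or~(3). The paper phrases the mixed case as two subcases rather than a ``without loss of generality'', but the logic is identical.
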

\begin{proof}
We prove, by Lemma \ref{lem-ELRC}, that for any nonempty
$E\subseteq[n]$ of size $|E|\leq t_1+t_2+1$, there exists an $i\in
E$ such that $i$ has a recovering set $R\subseteq[n]\backslash E$.
Obviously, it holds when $E\subseteq A$ or $E\subseteq B~($by
condition (2) or (4)$)$. So we assume $E\cap A\neq\emptyset$ and
$E\cap B\neq\emptyset$. Consider the following two cases.

Case 1: $0<|E\cap A|\leq t_1$. By condition (1), there exists an
$i\in E$ such that $i$ has a recovering set $R\subseteq
A\backslash E\subseteq[n]\backslash E$.

Case 2: $|E\cap A|>t_1$. Since $|E|\leq t_1+t_2+1$ and $A\cap
B=\emptyset$, then $0<|E\cap B|\leq t_2$. By condition (3), there
exists an $i\in E$ such that $i$ has a recovering set $R\subseteq
B\backslash E\subseteq[n]\backslash E$.

The proof is completed by combining the above cases.
\end{proof}

\subsection{Repair Graph and Minimal Repair Graph}
Let $G=(\mathcal V,\mathcal E)$ be a directed, acyclic graph,
where $\mathcal V$ is the vertex set and $\mathcal E$ is the
(directed) edge set. A directed edge $e$ from vertex $u$ to $v$ is
denoted by an ordered pair $e=(u,v)$, where $u$ is called the
\emph{tail} of $e$ and $v$ the \emph{head} of $e$. Moreover, $u$
is called an \emph{in-neighbor} of $v$ and $v$ an
\emph{out-neighbor} of $u$. For each $v\in \mathcal V$, let
$\text{In}(v)$ and $\text{Out}(v)$ denote the set of in-neighbors
and out-neighbors of $v$ respectively. If
$\text{In}(v)=\emptyset$, we call $v$ a \emph{source}; otherwise,
$v$ is called an \emph{inner vertex}. Denote by $\text{S}(G)$ the
set of all sources of $G$. For any $E\subseteq\mathcal V$, let
\begin{align}\label{Def-Out-E}
\text{Out}(E)=\bigcup_{v\in E}\text{Out}(v)\backslash E.
\end{align} By \eqref{Def-Out-E}, we have $E\cap\text{Out}(E)=\emptyset$.
For any $v\in\mathcal V$, denote
\begin{align}\label{Def-Out-2}
\text{Out}^2(v)=\bigcup_{u\in\text{Out}(v)}\text{Out}(u)
\backslash\text{Out}(v)
\end{align}
i.e., $\text{Out}^2(v)$ is the set of all $w\in \mathcal V$ such
that $w$ is an out-neighbor of some $u\in\text{Out}(v)$ but not an
out-neighbor of $v$.

As an example, consider the graph depicted in Fig. \ref{eg-fg-1},
where vertices are indexed by $\{1,2,\cdots, 16\}$. Then
$\text{Out}(3)=\{9,10\}$, $\text{Out}(4)=\{10,11\}$ and
$\text{Out}^2(3)=\{13,15,16\}$. Let $E=\{3,4\}$. Then
$\text{Out}(E)=\{9,10,11\}$.

\renewcommand\figurename{Fig}
\begin{figure}[htbp]
\begin{center}
\includegraphics[height=3.6cm]{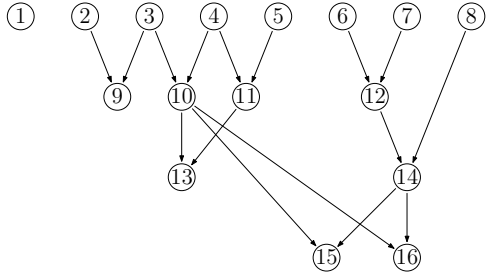}
\end{center}
\vspace{-0.2cm}\caption{An example repair graph with $n=16$, $r=2$
and $|S(G)|=8$. }\label{eg-fg-1}
\end{figure}

\begin{defn}[Repair Graph]\label{def-rg}
Let $\mathcal C$ be an $(n,k,r,t)$-SLRC and $G=(\mathcal
V,\mathcal E)$ be a directed, {\em acyclic} graph such that
$\mathcal V=[n]$. $G$ is called a {\em repair graph} of $\mathcal
C$ if for all inner vertex $i\in\mathcal V$, $\text{In}(i)$ is a
recovering set of $i$.
\end{defn}

Obviously, an $(n,k,r,t)$-SLRC may have many repair graphs. If
$\mathcal C$ is an $(n,k,r,t)$-SLRC, we usually use $\{G_\lambda;
\lambda\in\Lambda\}$ to denote the set of all repair graphs of
$\mathcal C$, where $\Lambda$ is some proper index set. It should
be noted that the repair graph defined here has subtle differences
with the recovering graph defined in \cite{Tamo14}, e.g., it must
be acyclic and an $(n,k,r,t)$-SLRC may have many repair graphs
such that for each $i\in[n]$, at most one recovery set of $i$ is
considered in each repair graph. The key ingredient of our
technique is the so-called {\em minimal repair graph} as defined
follows.

Let $\mathcal C$ be an $(n,k,r,t)$-SLRC and $\{G_\lambda;
\lambda\in\Lambda\}$ be the set of all repair graphs of $\mathcal
C$. Recall that for each $\lambda\in\Lambda$, $\text{S}(G_\lambda)$
is the set of all sources of $G_\lambda$. Denote
\begin{align}\label{dlt-star} \delta^*\triangleq
\min\{|\text{S}(G_\lambda)|;\lambda\in\Lambda\}.
\end{align}

\begin{defn}[Minimal Repair Graph]\label{def-mrg}
A repair graph $G_{\lambda_0}$, $\lambda_0\in\Lambda$, is called a
\emph{minimal repair graph} of $\mathcal C$ if
$|\text{S}(G_{\lambda_0})|=\delta^*$.
\end{defn}

\begin{rem}\label{ext-min-gph}
It is easy to see that any $(n,k,r,t)$-SLRC has at least one
minimal repair graph by noticing that the set
$\{|\text{S}(G_\lambda)|;\lambda\in\Lambda\}\subseteq [n]$ is
finite.
\end{rem}

\section{An Upper Bound on the Code Rate}
Before proposing the main result of this section, we need first
investigate properties of the minimal repair graphs of
$(n,k,r,t)$-SLRC.

\subsection{Properties of the Minimal Repair Graph}
In this subsection, we always assume that $\mathcal C$ is an
$(n,k,r,t)$-SLRC and $G_{\lambda_0}=(\mathcal V,\mathcal E)$ is a
minimal repair graph of $\mathcal C$. The following two results
are of fundamental.
\begin{lem}\label{edge-num-left}
\begin{align}
(n-\delta^*)r\geq |\mathcal E|.
\end{align}
\end{lem}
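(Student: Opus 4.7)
The plan is to prove this by a straightforward edge-counting argument, exploiting the fact that in any repair graph every inner vertex has in-degree at most $r$.

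First, I would partition the vertex set $\mathcal V = [n]$ into the sources $\text{S}(G_{\lambda_0})$ and the inner vertices $\mathcal V \setminus \text{S}(G_{\lambda_0})$. By the definition of $\delta^*$ in \eqref{dlt-star} and the assumption that $G_{\lambda_0}$ is a minimal repair graph, we have $|\text{S}(G_{\lambda_0})| = \delta^*$, so the number of inner vertices is exactly $n - \delta^*$.

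Next, I would use the standard identity that the total number of edges equals the sum of the in-degrees, namely $|\mathcal E| = \sum_{v \in \mathcal V} |\text{In}(v)|$. Every source contributes $0$ to this sum (by definition of source), so the sum reduces to one over the $n - \delta^*$ inner vertices.

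Finally, I would invoke Definition \ref{def-rg}: for each inner vertex $i$, the set $\text{In}(i)$ is a recovering set of $i$ in the code $\mathcal C$. By the convention fixed after Definition \ref{e-lrc} (every recovering set has size at most $r$), this gives $|\text{In}(i)| \leq r$ for every inner vertex. Summing this bound over all $n - \delta^*$ inner vertices yields $|\mathcal E| \leq (n - \delta^*) r$, which is exactly the claimed inequality. There is no real obstacle here — the statement is essentially a degree-counting bookkeeping fact, and the minimality of $G_{\lambda_0}$ is not even needed beyond interpreting $\delta^*$ as the number of sources of the specific graph under consideration.
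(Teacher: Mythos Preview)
Your proof is correct and follows essentially the same argument as the paper: count edges by in-degree, note that the $n-\delta^*$ inner vertices each have in-degree at most $r$ (since $\text{In}(i)$ is a recovering set), and sum. The paper's proof is just a one-line version of what you wrote.
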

\begin{proof}
By the definition, $G_{\lambda_0}$ has $n-\delta^*$ inner vertices
and each of them has at most $r$ in-neighbors, and hence the
result follows.
\end{proof}
\begin{lem}\label{dim-dlt}
\begin{align}
\label{k-leq-delta} k\leq\delta^*.
\end{align}
\end{lem}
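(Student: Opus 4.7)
The plan is to show that the restriction map to the source set is injective on $\mathcal C$, which immediately gives $k\leq|\mathrm{S}(G_{\lambda_0})|=\delta^*$. The key features that make this work are (i) $G_{\lambda_0}$ is acyclic, so the vertices admit a topological ordering, and (ii) for every inner vertex $i$, the in-neighborhood $\mathrm{In}(i)$ is an actual recovering set of $i$, so $x_i$ is a fixed $\mathbb{F}$-linear combination of $\{x_j:j\in\mathrm{In}(i)\}$ for every codeword $x\in\mathcal C$.

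Concretely, I would proceed as follows. Fix a topological ordering $v_1,v_2,\ldots,v_n$ of the vertices of $G_{\lambda_0}$, so that every edge $(u,v)\in\mathcal E$ has $u$ appearing strictly before $v$ in the ordering. Let $S=\mathrm{S}(G_{\lambda_0})$ and define the projection
\[
\pi:\mathcal C\to\mathbb{F}^{|S|},\qquad \pi(x)=(x_j)_{j\in S}.
\]
I claim $\pi$ is $\mathbb{F}$-linear and injective. Linearity is immediate. For injectivity, suppose $\pi(x)=0$, i.e.\ $x_j=0$ for every $j\in S$. Walking through the topological ordering, at each inner vertex $v_\ell$ one has $\mathrm{In}(v_\ell)\subseteq\{v_1,\ldots,v_{\ell-1}\}$, and by the repair-graph property there exist coefficients $a_j\in\mathbb{F}$ with $x_{v_\ell}=\sum_{j\in\mathrm{In}(v_\ell)}a_j x_j$. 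Inductively, all previous $x_{v_1},\ldots,x_{v_{\ell-1}}$ are $0$, hence $x_{v_\ell}=0$. Thus $x=0$, so $\pi$ is injective.

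From injectivity, $k=\dim_{\mathbb{F}}\mathcal C\leq|S|=|\mathrm{S}(G_{\lambda_0})|=\delta^*$, which is the desired bound \eqref{k-leq-delta}.

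The proof is essentially routine; I do not expect a real obstacle. The only subtlety worth flagging is that the acyclicity built into Definition~\ref{def-rg} is what legitimizes the topological ordering and thus the induction; without it, the in-neighborhood relations would be circular and the restriction-to-sources argument would fail. Note also that the argument never uses minimality of $G_{\lambda_0}$: the same proof gives $k\leq|\mathrm{S}(G_\lambda)|$ for every repair graph $G_\lambda$, and so $k\leq\min_\lambda|\mathrm{S}(G_\lambda)|=\delta^*$, which is the cleanest way to phrase it.
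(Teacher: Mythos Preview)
Your proof is correct and follows essentially the same approach as the paper: both use acyclicity to induct along a topological order and the recovering-set property of $\mathrm{In}(i)$ to conclude that the source coordinates determine the entire codeword (the paper phrases this dually as ``the source symbols span $\mathcal C$''). Your observation that minimality of $G_{\lambda_0}$ is irrelevant and that the bound holds for every repair graph is a nice sharpening the paper leaves implicit.
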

\begin{proof}
According to Definition \ref{def-rg}, for each $j\in[n]$, the
$j$th code symbol of $\mathcal C$ is a linear combination of the
code symbols in $\text{In}(j)$. In other words, the code symbols
of $\text{In}(j)$ spans the code symbols of
$\{j\}\cup\text{In}(j)$. Moreover, since $G_{\lambda_0}$ is
acyclic, then inductively, the code symbols of
$\text{S}(G_{\lambda_0})$ spans $\mathcal C$, which proves
$k\leq|\text{S}(G_{\lambda_0})|=\delta^*$.
\end{proof}

The following is a key lemma to investigate the structure
of $G_{\lambda_0}$.
\begin{lem}\label{mrg-out}
For any $E\subseteq[n]$ of size $|E|\leq t$,
\begin{align}\label{Out-geq-E}
|\text{Out}(E)|\geq |E\cap \text{S}(G_{\lambda_0})|.
\end{align}
\end{lem}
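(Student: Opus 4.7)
The plan is to argue by contradiction: assuming $|\text{Out}(E)|<|E\cap \text{S}(G_{\lambda_0})|$, I will build a new repair graph $G'$ of $\mathcal C$ whose source set has strictly smaller size than $\text{S}(G_{\lambda_0})$, contradicting the minimality in \eqref{dlt-star}. Write $S_E=E\cap \text{S}(G_{\lambda_0})$ and note that every vertex of $\text{Out}(E)\subseteq\overline{E}$ has at least one in-neighbor in $E$ (so no source of $G_{\lambda_0}$ lies in $\text{Out}(E)$).

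Since $|E|\leq t$ and $\mathcal C$ is an $(n,k,r,t)$-SLRC, Definition \ref{e-lrc} provides a sequential indexing $E=\{i_1,\ldots,i_{|E|}\}$ and recovering sets $R_\ell\subseteq\overline{E}\cup\{i_1,\ldots,i_{\ell-1}\}$ with $|R_\ell|\leq r$ for each $\ell$. I define $G'=([n],\mathcal E')$ as follows: set $\text{In}_{G'}(i_\ell)=R_\ell$ for every $i_\ell\in E$; set $\text{In}_{G'}(v)=\emptyset$ for every $v\in \text{Out}(E)$; and set $\text{In}_{G'}(v)=\text{In}_{G_{\lambda_0}}(v)$ for every remaining $v\in\overline{E}\setminus\text{Out}(E)$ (for which, by definition of $\text{Out}(E)$, the in-neighbors in $G_{\lambda_0}$ already avoid $E$).

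The central technical point — and the step I expect to be the main obstacle — is checking that $G'$ is acyclic. The key observation is that the above prescription leaves \emph{no} edge going from $E$ to $\overline{E}$: for $v\in\overline{E}\setminus\text{Out}(E)$ every in-neighbor lies in $\overline{E}$, and the vertices in $\text{Out}(E)$ have been cleared of in-neighbors altogether. Consequently any directed cycle in $G'$ either stays entirely inside $\overline{E}$ — but then it would be a cycle of $G_{\lambda_0}$, impossible — or once it enters $E$ it cannot exit. Within $E$, the only edges are $(i_j,i_\ell)$ with $i_j\in R_\ell$, which forces $j<\ell$, so no cycle lives inside $E$ either.

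With acyclicity verified, $G'$ is a repair graph of $\mathcal C$: the in-neighbors of each inner vertex form a recovering set, either inherited from $G_{\lambda_0}$ or supplied as $R_\ell$ by the SLRC property. Finally I count sources. Every vertex of $E$ is an inner vertex in $G'$ since $|R_\ell|\geq 2$, so the $|S_E|$ sources of $G_{\lambda_0}$ that sat inside $E$ are eliminated. On the other side, $\text{Out}(E)$ contributes $|\text{Out}(E)|$ brand-new sources, while the sources of $G_{\lambda_0}$ lying in $\overline{E}\setminus\text{Out}(E)$ remain sources in $G'$. Thus
\begin{align*}
|\text{S}(G')|=|\text{S}(G_{\lambda_0})\cap\overline{E}|+|\text{Out}(E)|
=|\text{S}(G_{\lambda_0})|-|S_E|+|\text{Out}(E)|,
\end{align*}
which under the hypothesis $|\text{Out}(E)|<|S_E|$ is strictly less than $\delta^*$, contradicting Definition \ref{def-mrg}. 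Hence \eqref{Out-geq-E} must hold.
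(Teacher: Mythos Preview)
Your proposal is correct and follows essentially the same route as the paper: you build the identical auxiliary graph (the paper calls it $G_{\lambda_1}$), obtained by clearing the in-edges of $E\cup\text{Out}(E)$ and replacing the in-sets of $E$ by the sequential recovering sets $R_\ell$, then verify acyclicity, repair-graph status, and the source count $|\text{S}(G')|=|\text{S}(G_{\lambda_0})|-|S_E|+|\text{Out}(E)|$. Your acyclicity argument (no $E\to\overline{E}$ edges, so a cycle lies entirely in $\overline{E}$ or entirely in $E$, both impossible) is spelled out more carefully than in the paper, but the construction and logic are the same.
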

\begin{proof}
Suppose, on the contrary, there exists an
$E=\{i_1,i_2,\cdots,i_{t'}\}\subseteq[n]$ such that $|E|=t'\leq t$
and $|\text{Out}(E)|<|E\cap \text{S}(G_{\lambda_0})|$. By
definition of $(n, k, r, t)$-SLRC, we can let
$R_{\ell}\subseteq\overline{E}\cup\{i_1,\cdots,i_{\ell-1}\}$ be a
recovering set of $i_\ell$ for each $\ell\in[t']$.

We can construct a graph $G_{\lambda_1}$ from $G_{\lambda_0}$ by
deleting and adding edges as follows: First, for each $i\in E\cup
\text{Out}(E)$ and $j\in\text{In}(i)$, delete $(j,i)$ if it is an
edge of $G_{\lambda_0}$, and denote the resulted graph as
$G_{\lambda_1'}$; Second, for each $i_\ell\in E$ and each $j\in
R_\ell$, add a directed edge from $j$ to $i_\ell$, and let the
resulted graph be $G_{\lambda_1}$. Clearly, $G_{\lambda_1'}$ is
acyclic because $G_{\lambda_0}$ is acyclic. Moreover, since
$R_{\ell}\subseteq\overline{E}\cup\{i_1,\cdots,i_{\ell-1}\}$ for
each $\ell\in[t']$, then by construction, $G_{\lambda_1}$ is also
acyclic.

We declare that $G_{\lambda_1}$ is a repair graph of $\mathcal C$
and $|\text{S}(G_{\lambda_1})|<|\text{S}(G_{\lambda_0})|$, which
contradicts to the minimality of $G_{\lambda_0}$.

In fact, by construction,
$\text{S}(G_{\lambda_1})=(\text{S}(G_{\lambda_0})\backslash E)\cup
\text{Out}(E)$. Then for each inner node $i$ of $G_{\lambda_1}$,
we have the following two cases:

  Case 1: $i\in E$. Then $i=i_\ell$ for some $\ell\in[t']$ and by
  the construction of $G_{\lambda_1}$, $\text{In}(i)=R_\ell$ is a
  recovering set of $i$.

  Case 2: $i$ is an inner vertex of $G_{\lambda_0}$ and
  $i\notin\text{Out}(E)$. Then considering $G_{\lambda_0}$,
  $\text{In}(i)\subseteq\overline{E}=[n]\backslash E$ is a
  recovering set of $i$.

So $\text{In}(i)$ is always a recovering set of $i$. Hence,
$G_{\lambda_1}$ is a repair graph of $\mathcal C$.

On the other hand, note that by definition,
$\text{S}(G_{\lambda_0})\cap\text{Out}(E)=\emptyset$ and
$E\cap\text{Out}(E)=\emptyset$. So if we assume that
$|\text{Out}(E)|<|E\cap \text{S}(G_{\lambda_0})|$, then we have
\begin{align}\label{eq1-mrg-out}
|\text{S}(G_{\lambda_1})|&=|(\text{S}(G_{\lambda_0})\backslash
E)\cup\text{Out}(E)|\nonumber\\&=|(\text{S}(G_{\lambda_0})\backslash
E)|+|\text{Out}(E)|\nonumber\\&=|(\text{S}(G_{\lambda_0})|-|E\cap
\text{S}(G_{\lambda_0})|+|\text{Out}(E)|\nonumber\\&<|\text{S}(G_{\lambda_0})|,
\end{align}
which completes the proof.
\end{proof}

The following example illustrates the construction of
$G_{\lambda_1}$ in the proof of Lemma \ref{mrg-out}.
\begin{exam}
Consider the graph in Fig. \ref{eg-fg-1}, which we denote as
$G_{\lambda_0}$ here. Suppose it is a repair graph of a $(r=2,
t=3)$-SLRC. We can see that $\{2,3\}$ is a recovering set of $9$,
$\{3,4\}$ is a recovering set of $10$, and etc.

Let $E=\{2,3,9\}$ and assume the recovering sets of $2,3$ and $9$
are $\{1,10\}, \{12,13\}$ and $\{11,14\}$, respectively. Then we
can construct a graph $G_{\lambda_1}$ as follows. Since
$\text{Out}(E)=\{10\}$, thus, in the first step, we delete edges
$(2,9), (3,9), (3,10)$ and $(4,10)$; and in the second step, we
add edges $(1,2),(10,2),(12,3),(13,3),(11,9)$ and $(14,9)$. The
resulted graph $G_{\lambda_1}$ is shown in Fig. \ref{eg-fg-2}. We
can see that
$|\text{S}(G_{\lambda_1})|=|(\text{S}(G_{\lambda_0})\backslash
E)\cup\text{Out}(E)|=|\{1,4,5,6,7,8,10\}|=7<8=|\text{S}(G_{\lambda_0})|$.
So the graph in Fig. \ref{eg-fg-1} is not a minimal repair graph.
\end{exam}
\renewcommand\figurename{Fig}
\begin{figure}[htbp]
\begin{center}
\includegraphics[height=3.6cm]{fg1.12}
\end{center}
\vspace{-0.2cm}\caption{Construction of $G_{\lambda_1}$ from the
graph in Fig. \ref{eg-fg-1}.}\label{eg-fg-2}
\end{figure}

The following two corollaries give some explicit structural
properties of the minimal repair graphs of $(n,k,r,t)$-SLRC.
\begin{cor}\label{mrg-cor-1}
If $t\geq 3$, for any $v\in\text{S}(G_{\lambda_0})$, the following
hold:
\begin{itemize}
 \item [1)] $|\text{Out}(v)|\geq 1$.
 \item [2)] If $\text{Out}(v)=\{v'\}$, then $\text{Out}^2(v)=\text{Out}(v')
 \neq\emptyset$.
 \item [3)] If $\text{Out}(v)=\{v_1\}$ and $\text{Out}(v_1)=\{v_2\}$, then
 $\text{Out}(v_2)\neq\emptyset$.
 \item [4)] If $\text{Out}(v)=\{v_1\}$ and $\text{Out}(v_1)=\{v_2\}$,
 then $|\text{Out}(u)|\geq 2$ for any source
 $u\in\text{In}(v_2)$.
 \item [5)] If $v$, $w$ are two distinct sources and
 $|\text{Out}(v)|=|\text{Out}(w)|=1$, then $\text{Out}(v)\neq\text{Out}(w)$.
\end{itemize}
\end{cor}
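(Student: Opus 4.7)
The entire corollary is a direct consequence of Lemma \ref{mrg-out} applied to carefully chosen subsets $E$, combined with the acyclicity of $G_{\lambda_0}$. My plan is to handle each of the five parts by specifying $E$, simplifying $\text{Out}(E)$ using acyclicity (which guarantees that no vertex along a directed path appears among the out-neighbors of a later vertex), and then forcing a contradiction (or direct conclusion) by comparing $|\text{Out}(E)|$ against $|E\cap\text{S}(G_{\lambda_0})|$.

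For part 1), take $E=\{v\}$. Since $v\in\text{S}(G_{\lambda_0})$, we have $|E\cap\text{S}(G_{\lambda_0})|=1$, and by acyclicity $\text{Out}(E)=\text{Out}(v)$, so Lemma \ref{mrg-out} immediately yields $|\text{Out}(v)|\geq 1$. For part 2), take $E=\{v,v'\}$ with $\text{Out}(v)=\{v'\}$. Acyclicity ($v\to v'$) gives $v\notin\text{Out}(v')$, hence $\text{Out}(E)=\text{Out}(v')$. Since $v\in\text{S}(G_{\lambda_0})$, Lemma \ref{mrg-out} forces $|\text{Out}(v')|\geq 1$. The identity $\text{Out}^2(v)=\text{Out}(v')$ follows from the definition of $\text{Out}^2$ because $|\text{Out}(v)|=1$ and $v'\notin\text{Out}(v')$ by acyclicity.

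For part 3), take $E=\{v,v_1,v_2\}$; here the hypothesis $t\geq 3$ is used since $|E|=3$. By acyclicity the path $v\to v_1\to v_2$ rules out $v,v_1\in\text{Out}(v_2)$, so $\text{Out}(E)=\text{Out}(v_2)$, and Lemma \ref{mrg-out} gives $|\text{Out}(v_2)|\geq 1$. For part 4), I argue by contradiction: assume $u\in\text{In}(v_2)\cap\text{S}(G_{\lambda_0})$ with $|\text{Out}(u)|=1$. Then necessarily $\text{Out}(u)=\{v_2\}$, and I apply Lemma \ref{mrg-out} to $E=\{v,v_1,u\}$ (again of size $3$, invoking $t\geq 3$). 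Acyclicity ensures $v,u\notin\text{Out}(E)$ and also $v\neq u$ (else $v$ would lie on two conflicting paths), so $\text{Out}(E)=\{v_2\}$, giving $|\text{Out}(E)|=1<2\leq|E\cap\text{S}(G_{\lambda_0})|$, a contradiction.

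For part 5), suppose toward contradiction that $\text{Out}(v)=\text{Out}(w)=\{v'\}$ with $v\neq w$ both sources. Applying Lemma \ref{mrg-out} to $E=\{v,w\}$ (valid since $t\geq 2$), acyclicity gives $v,w\neq v'$, so $\text{Out}(E)=\{v'\}$, yielding $1=|\text{Out}(E)|<|E\cap\text{S}(G_{\lambda_0})|=2$, a contradiction. The only real subtlety across all five parts is consistently invoking acyclicity to remove elements of $E$ from the unions that define $\text{Out}(E)$ and $\text{Out}^2(v)$; there is no deep obstacle, and the essential use of the $t\geq 3$ hypothesis is confined to parts 3) and 4), where $|E|=3$ is necessary.
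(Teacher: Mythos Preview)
Your proposal is correct and follows essentially the same approach as the paper: each part is established by applying Lemma~\ref{mrg-out} to the very same choices of $E$ (namely $\{v\}$, $\{v,v'\}$, $\{v,v_1,v_2\}$, $\{v,v_1,u\}$, and $\{v,w\}$, respectively), with acyclicity used to simplify $\text{Out}(E)$. Your write-up is in fact slightly more explicit than the paper's about how acyclicity forces the needed distinctness of vertices (e.g., $v\neq u$ in part~4 and $v'\notin\text{Out}(v')$ in part~2), but the argument is the same.
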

\begin{proof}
We can prove all claims by contradiction.

1) Suppose otherwise $|\text{Out}(v)|=0$. Let $E=\{v\}$. Then,
$|\text{Out}(E)|=|\text{Out}(v)|=0<1=|\{v\}|=|E\cap\text{S}(G_{\lambda_0})|$,
which contradicts to Lemma \ref{mrg-out}.

2) Since $G_{\lambda_0}$ is acyclic and $\text{Out}(v)=\{v'\}$,
then from \eqref{Def-Out-2}, $\text{Out}^2(v)=\text{Out}(v')$. If
$\text{Out}(v')=\emptyset$, then by letting $E=\{v,v'\}$, we have
$|\text{Out}(E)|=|\emptyset|=0<1=|\{v\}|=|E\cap\text{S}(G_{\lambda_0})|$,
which contradicts to Lemma \ref{mrg-out}.

3) If $\text{Out}(v_2)=\emptyset$, then by letting
$E=\{v,v_1,v_2\}$, we have
$|\text{Out}(E)|=|\emptyset|=0<1=|\{v\}|=|E\cap\text{S}(G_{\lambda_0})|$,
which contradicts to Lemma \ref{mrg-out}.

4) By assumption, we can see that $u\neq v$. Suppose otherwise
$|\text{Out}(u)|=1$. Then $\text{Out}(u)=\{v_2\}$ since
$u\in\text{In}(v_2)$. Let $E=\{v,v_1,u\}$. We have
$|\text{Out}(E)|=|\{v_2\}|=1<2=|\{v,u\}|=|E\cap\text{S}(G_{\lambda_0})|$,
which contradicts to Lemma \ref{mrg-out}.

5) Suppose otherwise $\text{Out}(v)=\text{Out}(w)=\{v_1\}$. Let
$E=\{v,w\}$. Then we have
$|\text{Out}(E)|=|\{v_1\}|=1<2=|\{v,w\}|=|E\cap\text{S}(G_{\lambda_0})|$,
which contradicts to Lemma \ref{mrg-out}.
\end{proof}

We give in the below an example and a counterexample of minimal
repair graph that can be verified by Corollary \ref{mrg-cor-1}.
\begin{exam}
Consider the repair graph $G_{\lambda_0}$ in Fig. \ref{eg-fg-4-1},
where the vertex set is $\mathcal V=\{1,2,\cdots,15\}$. We can
check that $|\text{Out}(E)|\geq |E\cap\text{S}(G_{\lambda_0})|$
for each $E\subseteq[n]$ of size $|E|\leq t$. Corresponding to
items 1)$-$5) of Corollary \ref{mrg-cor-1}, we can check:
\begin{itemize}
 \item [1)] For every $v\in\text{S}(G_{\lambda_0})$, $|\text{Out}(v)|\geq 1$.
 \item [2)] For $v=5$ and $v'=10$, we have $\text{Out}(v)=\{v'\}$ and
 $\text{Out}^2(v)=\text{Out}(v')=\{12,13\}$.
 \item [3)] For $v=1$, $v_1=8$ and $v_2=11$, we have $\text{Out}(v)=\{v_1\}$,
 $\text{Out}(v_1)=\{v_2\}$ and $\text{Out}(v_2)=\{14\}$.
 \item [4)] For $v=1$, $v_1=8$ and $v_2=11$, we have $u=6\in\text{In}(v_2)$ is
 a source and $|\text{Out}(u)|=|\{10,11\}|\geq 2$.
 \item [5)] For $v=1$ and $w=5$, we have
 $|\text{Out}(v)|=|\text{Out}(w)|=1$ and $\text{Out}(v)=\{8\}
 \neq\text{Out}(w)=\{10\}$.
\end{itemize}
\end{exam}
\renewcommand\figurename{Fig}
\begin{figure}[htbp]
\begin{center}
\includegraphics[height=3.6cm]{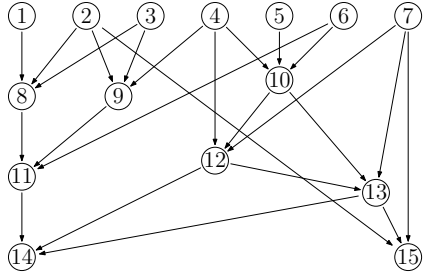}
\end{center}
\vspace{-0.2cm}\caption{An example repair graph
$G_{\lambda_0}=(\mathcal V,\mathcal E)$, where $\mathcal
V=\{1,2,\cdots,15\}$.}\label{eg-fg-4-1}
\end{figure}

\begin{exam}\label{ex-c1}
Any one of the following five observations, which violates the
corresponding five cases of Lemma \ref{mrg-cor-1}, can show that
the graph in Fig. \ref{eg-fg-1} is not a minimal repair graph.

1) For the source $v=1$, we have $\text{Out}(1)=\emptyset$.

2) For the source $v=2$, we have $\text{Out}(2)=\{9\}$ and
$\text{Out}(9)=\emptyset$.

3) For the source $v=5$, we have $\text{Out}(5)=\{11\}$,
$\text{Out}(11)=\{13\}$ and $\text{Out}(13)=\emptyset$.

4) For the source $v=6$, we have $\text{Out}(6)=\{12\}$,
$\text{Out}(12)=\{14\}$ and there is another source
$u=8\in\text{In}(14)$ such that $\text{Out}(8)=\{14\}$.

5) For the two sources $v=6$ and $w=7$, we have
$\text{Out}(6)=\text{Out}(7)=\{12\}$.
\end{exam}

\begin{rem}\label{rem-t-1-2}
In Corollary \ref{mrg-cor-1}, claim 1) holds for all $t\geq 1$,
since the contradiction is derived from a subset $E$ of size $1$.
And claims 2), 5) hold for all $t\geq 2$ since the contradictions
are derived from subsets of size $2$.
\end{rem}

\begin{cor}\label{mrg-cor-2}
Suppose $t\geq 3$ and $v\in\text{S}(G_{\lambda_0})$ such that
$\text{Out}(v)=\{v_1,v_2\}$. Then the following hold:
\begin{itemize}
 \item [1)] $\text{Out}(v_1)\neq\emptyset$ or
 $\text{Out}(v_2)\neq\emptyset$.
 \item [2)] If $\{v_1\}=\text{Out}(u)$ for some source $u$, then
 $\text{Out}(v_2)\neq\emptyset$.
 \item [3)] If $\{v_1\}=\text{Out}(u)$ for some source $u$, then
 $|\text{Out}(w)|\geq 2$ for any source $w\in\text{In}(v_2)$.
\end{itemize}
\end{cor}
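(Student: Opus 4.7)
The plan is to mimic, for each of the three claims, the contradiction argument used in the proof of Corollary \ref{mrg-cor-1}: assume the negation, construct an explicit set $E\subseteq[n]$ of size at most $3\leq t$, and show that $|\text{Out}(E)|<|E\cap \text{S}(G_{\lambda_0})|$, which contradicts Lemma \ref{mrg-out}. The only new ingredient relative to Corollary \ref{mrg-cor-1} is that now $v$ has two out-neighbors, so the set $E$ must combine $v$ with a suitable auxiliary source (or out-neighbor) in order to shrink $\text{Out}(E)$ below the number of sources absorbed into $E$.

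For claim 1), I would take $E=\{v,v_1,v_2\}$. If both $\text{Out}(v_1)$ and $\text{Out}(v_2)$ are empty, then $\text{Out}(E)=(\{v_1,v_2\}\cup\emptyset\cup\emptyset)\setminus E=\emptyset$, while $v\in E\cap \text{S}(G_{\lambda_0})$ gives $|E\cap \text{S}(G_{\lambda_0})|\geq 1$, contradicting Lemma \ref{mrg-out}. For claim 2), I would take $E=\{v,u,v_2\}$. Since $|\text{Out}(v)|=2$ and $|\text{Out}(u)|=1$, we have $u\neq v$, so $u$ and $v$ are two distinct sources in $E$. Under the assumption $\text{Out}(v_2)=\emptyset$, we get $\text{Out}(E)=(\{v_1,v_2\}\cup\{v_1\}\cup\emptyset)\setminus E=\{v_1\}$, while $|E\cap \text{S}(G_{\lambda_0})|\geq 2$, again contradicting Lemma \ref{mrg-out}.

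For claim 3), I would take $E=\{v,u,w\}$, where $w$ is a source in $\text{In}(v_2)$ with the supposed property $\text{Out}(w)=\{v_2\}$. The careful bookkeeping here is the main (and only) obstacle: I must argue that $v,u,w$ are pairwise distinct and that $v_1,v_2\notin E$. Distinctness of $v$ from $u$ and from $w$ follows from $|\text{Out}(v)|=2$ while $|\text{Out}(u)|=|\text{Out}(w)|=1$; distinctness of $u$ from $w$ follows from $\text{Out}(u)=\{v_1\}\neq\{v_2\}=\text{Out}(w)$ (using that $v_1\neq v_2$ because $|\text{Out}(v)|=2$). Moreover $v_1,v_2$ are not sources since $v\in \text{In}(v_1)\cap \text{In}(v_2)$, so $v_1,v_2\notin E$. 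Thus $\text{Out}(E)=(\{v_1,v_2\}\cup\{v_1\}\cup\{v_2\})\setminus E=\{v_1,v_2\}$, giving $|\text{Out}(E)|=2<3=|E\cap \text{S}(G_{\lambda_0})|$, the desired contradiction with Lemma \ref{mrg-out}.

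In short, the proof is a routine application of Lemma \ref{mrg-out} three times; the main care required is to pick the right three-element set $E$ in each case and to verify the distinctness of its elements so that the set-theoretic computation of $\text{Out}(E)$ really yields a number strictly smaller than the count of sources inside $E$.
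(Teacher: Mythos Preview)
Your proposal is correct and follows exactly the same approach as the paper: for each claim you choose the same set $E$ (namely $\{v,v_1,v_2\}$, $\{u,v,v_2\}$, and $\{u,v,w\}$, respectively) and derive a contradiction with Lemma \ref{mrg-out}. In fact you are more careful than the paper, which simply names the set $E$ in each case without verifying the distinctness of its elements or that $v_1,v_2\notin E$.
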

\begin{proof}
All the claims can be proved by assuming the converse and choosing
a proper $E$ as in the proof of Lemma \ref{mrg-cor-1} and then
derive a contradiction.

1) Suppose otherwise $\text{Out}(v_1)=\text{Out}(v_2)=\emptyset$.
We let $E=\{v,v_1,v_2\}$ and have
$|\text{Out}(E)|=|\emptyset|=0<1=|\{v\}|=|E\cap\text{S}(G_{\lambda_0})|$,
which contradicts to Lemma \ref{mrg-out}.

2) Suppose otherwise $\text{Out}(v_2)=\emptyset$. Similarly, we can get
a contradiction by letting $E=\{u,v,v_2\}$.

3) Suppose otherwise there exist a source $w$ such that
$\text{Out}(w)=\{v_2\}$. A contradiction can be obtained by
letting $E=\{u,v,w\}$.
\end{proof}

We give in the below an example and a counterexample of minimal
repair graph that can be verified by Corollary \ref{mrg-cor-2}.
\begin{exam}
Again consider the repair graph $G_{\lambda_0}$ in Fig.
\ref{eg-fg-4-1}. Let $v=3$, $v_1=8$ and $v_2=9$. Then
$v\in\text{S}(G_{\lambda_0})$ and $\text{Out}(v)=\{v_1,v_2\}$.
Corresponding to items 1)$-$3) of Corollary \ref{mrg-cor-2}, we
can check:
\begin{itemize}
 \item [1)] $\text{Out}(v_1)=\text{Out}(v_2)=\{11\}\neq\emptyset$.
 \item [2)] For $u=1$, we have
 $\{v_1\}=\text{Out}(u)$ and $\text{Out}(v_2)=\{11\}\neq\emptyset$.
 \item [3)] For $w=4$, we can see that $w\in\text{In}(v_2)$ is a
 source and $|\text{Out}(w)|=\{9,10,12\}|\geq 2$.
\end{itemize}
\end{exam}

\begin{exam}\label{ex-c2}
Let $G$ be a repair graph as shown in Fig. \ref{eg-fg-3}. Then any
one of the following three observations, which violates the
corresponding three cases of Corollary \ref{mrg-cor-2}, can show
that $G$ is not a minimal repair graph.

1) There exists a source  $v=5$ such that $\text{Out}(v)=\{9,10\}$ and
$\text{Out}(9)=\text{Out}(10)=\emptyset$.

2) There exists a source $v=2$ such that $\text{Out}(2)=\{7,8\}$,
and a source $u=1$ such that  $\text{Out}(1)=\{7\}$ and
$\text{Out}(8)=\emptyset$.

3) There exist three sources $v=2$, $u=1$ and $w=3$ such that
$\text{Out}(2)=\{7,8\}$, $\text{Out}(1)=\{7\}$ and
$\text{Out}(3)=\{8\}$.
\end{exam}
\renewcommand\figurename{Fig}
\begin{figure}[htbp]
\begin{center}
\includegraphics[height=2.5cm]{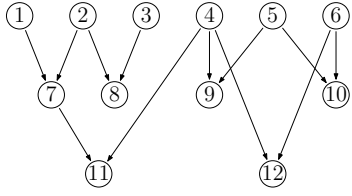}
\end{center}
\vspace{-0.2cm}\caption{An example repair graph with $n=12$
and $r=2$.}\label{eg-fg-3}
\end{figure}

\subsection{Upper Bound on the Code Rate for $(n,k,r,3)$-SLRC}
In this subsection, we assume $\mathcal C$ is an $(n,k,r,3)$-SLRC
and $G_{\lambda_0}=(\mathcal V,\mathcal E)$ is a minimal repair
graph of $\mathcal C$. Recall that $\text{S}(G_{\lambda_0})$ is
the set of all sources of  $G_{\lambda_0}$. We divide
$\text{S}(G_{\lambda_0})$ into four subsets as follows.
\begin{align}\label{divide-a}
A=\{v\in\text{S}(G_{\lambda_0}); |\text{Out}(v)|\geq
3\},\end{align}
\begin{align}\label{divide-b}B=\{v\in\text{S}(G_{\lambda_0});
|\text{Out}(v)|=2\},\end{align}
\begin{align}
\label{divide-c1}C_1=\{v\in\text{S}(G_{\lambda_0});
|\text{Out}(v)|=1 \text{~and~} |\text{Out}^2(v)|=1\}\end{align}
and
\begin{align}\label{divide-c2}C_2=\{v\in\text{S}(G_{\lambda_0});
|\text{Out}(v)|=1 \text{~and~} |\text{Out}^2(v)|\geq
2\}.\end{align} Clearly, $A, B, C_1$ and $C_2$ are mutually
disjoint. Moreover, by 1), 2) of Corollary \ref{mrg-cor-1},
$\text{S}(G_{\lambda_0})=A\cup B\cup C_1\cup C_2$. Hence,
\begin{align}\label{num-source}
\delta^*=|\text{S}(G_{\lambda_0})|=|A|+|B|+|C_1|+|C_2|.
\end{align}

We define three types of edges of $G_{\lambda_0}$, denoted by red
edge, green edge and blue edge respectively, as follows.

Firstly, an edge is called a \emph{red edge} if its tail is a
source. For each $v\in\text{S}(G_{\lambda_0})$, let $\mathcal
E_{\text{red}}(v)$ be the set of all red edges whose tail is $v$
and denote $$\mathcal
E_{\text{red}}=\bigcup_{v\in\text{S}(G_{\lambda_0})}\mathcal
E_{\text{red}}(v).$$ Then $\mathcal E_{\text{red}}$ is the set of
all red edges. Clearly, $|\mathcal
E_{\text{red}}(v)|=|\text{Out}(v)|$ and $\mathcal
E_{\text{red}}(w)\cap\mathcal E_{\text{red}}(v)=\emptyset$ for any
source $w\neq v$. So by \eqref{divide-a}$-$\eqref{divide-c2}, we
have
\begin{align}\label{num-red-edge}
|\mathcal
E_{\text{red}}|=\sum_{v\in\text{S}(G_{\lambda_0})}|\text{Out}(v)|\geq
3|A|+2|B|+|C_1|+|C_2|.
\end{align}

Secondly, an edge is called a \emph{green edge} if its tail is the
unique out-neighbor of some source in $C_1\cup C_2$. For each
$v\in C_1\cup C_2$, let $\mathcal E_{\text{green}}(v)$ be the set
of all green edges whose tail is the unique out-neighbor of $v$.
Clearly, $|\mathcal E_{\text{green}}(v)|=|\text{Out}^2(v)|$. Let
$$\mathcal E_{\text{green}}=\bigcup_{v\in C_1\cup C_2}\mathcal
E_{\text{green}}(v)$$ be the set of all green edges. Note that if
$v\neq w\in C_1\cup C_2$, then by 5) of Corollary \ref{mrg-cor-1},
$v'\neq w'$, where $v'($resp. $w')$ is the unique out-neighbor of
$v($resp. $w)$. So $\mathcal E_{\text{green}}(v)\cap\mathcal
E_{\text{green}}(w)=\emptyset$. Hence, by \eqref{divide-c1} and
\eqref{divide-c2},
\begin{align}\label{num-green-edge}
|\mathcal E_{\text{green}}|=\sum_{v\in C_1\cup
C_2}|\text{Out}^2(v)|\geq|C_1|+2|C_2|.
\end{align}

Thirdly, suppose $e\in\mathcal E$ is not a green edge and $v\in
B\cup C_1$. $e$ is called a \emph{blue edge belonging to $v$} if
one of the following two conditions hold:
\begin{itemize}
 \item[(a)] $v\in B$ and the tail of $e$ belongs to $\text{Out}(v)$.
 \item[(b)] $v\in C_1$ and the tail of $e$ belongs to $\text{Out}^2(v)$.
\end{itemize}
Let $\mathcal E_{\text{blue}}(v)$ be the set of all blue edges
belonging to $v$ and let $$\mathcal E_{\text{blue}}=\bigcup_{v\in
B\cup C_1}\mathcal E_{\text{blue}}(v)$$ be the set of all blue
edges. Then we have the following lemma.
\begin{lem}
The number of blue edges is lower bounded by
\begin{align}\label{num-blue-edge}
|\mathcal E_{\text{blue}}|\geq\frac{|B|+|C_1|}{r}.
\end{align}
\end{lem}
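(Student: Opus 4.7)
The plan is to establish (\ref{num-blue-edge}) by double-counting the (source, blue edge) incidences $\{(v,e) : v \in B \cup C_1,\ e \in \mathcal E_{\text{blue}}(v)\}$: lower-bound the count from the source side by showing $\mathcal E_{\text{blue}}(v)\neq\emptyset$ for every $v$, and upper-bound it from the edge side by showing each blue edge belongs to at most $r$ sources.

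First I would verify $\mathcal E_{\text{blue}}(v)\neq\emptyset$ for every $v\in B\cup C_1$. For $v\in C_1$ with $\text{Out}(v)=\{v_1\}$ and $\text{Out}(v_1)=\{v_2\}$, Corollary \ref{mrg-cor-1}(3) guarantees $\text{Out}(v_2)\neq\emptyset$, and Corollary \ref{mrg-cor-1}(4) precludes any source in $\text{In}(v_2)$ from having out-degree $1$; hence $v_2$ is not the unique out-neighbor of any source in $C_1\cup C_2$, so every edge out of $v_2$ is non-green and lies in $\mathcal E_{\text{blue}}(v)$. For $v\in B$ with $\text{Out}(v)=\{v_1,v_2\}$ I would split on whether $v_1$ is the unique out-neighbor of some source in $C_1\cup C_2$: if it is, Corollary \ref{mrg-cor-2}(2) yields $\text{Out}(v_2)\neq\emptyset$ and Corollary \ref{mrg-cor-2}(3) forbids $v_2$ from being any source's unique out-neighbor, so edges out of $v_2$ are non-green; if it is not, edges out of $v_1$ (when they exist) are non-green, and when $\text{Out}(v_1)=\emptyset$, Corollary \ref{mrg-cor-2}(1) forces $\text{Out}(v_2)\neq\emptyset$ while the symmetric form of Corollary \ref{mrg-cor-2}(2) rules out $v_2$ being a unique out-neighbor (else $\text{Out}(v_1)\neq\emptyset$).

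Next, for a fixed blue edge $e=(u,w)$, I would inject $\{v\in B\cup C_1:e\in \mathcal E_{\text{blue}}(v)\}$ into $\text{In}(u)$ via $\phi(v)=v$ when $v\in B$ (valid because $u\in\text{Out}(v)$) and $\phi(v)=$ the unique out-neighbor of $v$ when $v\in C_1$ (valid because $u\in\text{Out}^2(v)$). Injectivity holds because vertices in $B$ are sources whereas images of $C_1$-vertices are inner vertices, so the two contributions have disjoint ranges; within the $C_1$ case, Corollary \ref{mrg-cor-1}(5) forbids two distinct sources from sharing the same unique out-neighbor. Since $u$ has an incoming edge, $u$ is an inner vertex of $G_{\lambda_0}$, so $|\text{In}(u)|\leq r$ by the definition of a repair graph.

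Combining these gives
$$|B|+|C_1|\leq\sum_{v\in B\cup C_1}|\mathcal E_{\text{blue}}(v)|=\sum_{e\in\mathcal E_{\text{blue}}}\bigl|\{v:e\in\mathcal E_{\text{blue}}(v)\}\bigr|\leq r\,|\mathcal E_{\text{blue}}|,$$
which rearranges to (\ref{num-blue-edge}). The main obstacle I anticipate is the existence step for $v\in B$: carefully tracking the cases in which $v_1$ or $v_2$ coincides with the unique out-neighbor of some source in $C_1\cup C_2$, and verifying that Corollary \ref{mrg-cor-2} always leaves a non-green edge issuing from $\{v_1,v_2\}$. The injection and double-counting steps are then routine.
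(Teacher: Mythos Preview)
Your proposal is correct and follows essentially the same approach as the paper: prove (i) every $v\in B\cup C_1$ owns at least one blue edge, and (ii) each blue edge is owned by at most $r$ vertices via the same injection into $\text{In}(u')$, then combine. Your case analysis for $v\in B$ is organized a bit differently (you split first on whether $v_1$ is some source's unique out-neighbor, whereas the paper first picks a nonempty $\text{Out}(v_i)$ and then tests whether the resulting edge is green), and you make the double-counting and the source/inner-vertex disjointness in the injection more explicit than the paper does, but the underlying argument is identical.
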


\begin{proof}
It is sufficient to prove : i) For each $v\in B\cup C_1$, there
exists at least one blue edge belonging to $v$; and ii) Each blue
edge belongs to at most $r$ different $v\in B\cup C_1$. To prove
these two statements, we will use the definition of red edge,
green edge and blue edge repeatedly.

We first prove i) by considering the cases of $v\in B$ and $v\in
C_1$.

Let $v\in B$, and we look for a blue edge belonging to $v$. In
this case, by \eqref{divide-b}, we can assume
$\text{Out}(v)=\{v_1,v_2\}~($see Fig. \ref{fg-clm-1}(a)$)$. Then,
by 1) of Corollary \ref{mrg-cor-2}, $\text{Out}(v_1)\neq\emptyset$
or $\text{Out}(v_2)\neq\emptyset$. Without loss of generality,
assume $\text{Out}(v_1)\neq\emptyset$ and $v_3\in\text{Out}(v_1)$.
Consider $(v_1,v_3)$. If it is not a green edge, then by
definition, it is a blue edge belonging to $v$. So we assume that
$(v_1,v_3)$ is a green edge. Then by definition,
$\{v_1\}=\text{Out}(u)$ for some $u\in C_1\cup C_2$. By 2) of
Corollary \ref{mrg-cor-2}, $\text{Out}(v_2)\neq\emptyset$ and we
can let $v_4\in\text{Out}(v_2)$, as illustrated in Fig.
\ref{fg-clm-1}(a). Consider $(v_2,v_4)$. By 3) of Corollary
\ref{mrg-cor-2}, $|\text{Out}(w)|\geq 2$ for any source
$w\in\text{In}(v_2)$, which implies $(v_2,v_4)$ is not a green
edge. So $(v_2,v_4)$ is a blue edge belonging to $v$. Hence, for
each $v\in B$, we can always find a blue edge belonging to $v$.

\renewcommand\figurename{Fig}
\begin{figure}[htbp]
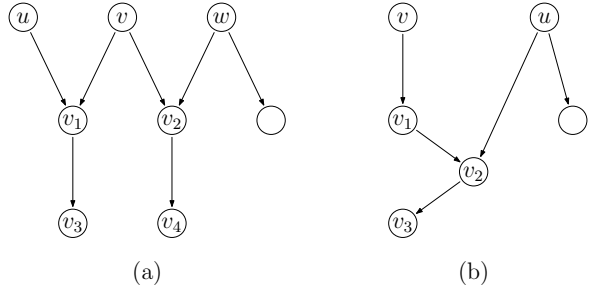

\begin{center}
\includegraphics[height=3.8cm]{fg1.2}
\hspace{1.2cm}\includegraphics[height=3.8cm]{fg1.1}
\end{center}
\vspace{-0.2cm}\caption{Two local graphs.}\label{fg-clm-1}
\end{figure}

Now, let $v\in C_1$ and we look for a blue edge belonging to $v$.
By \eqref{divide-c1}, we can assume $\text{Out}(v)=\{v_1\}$ and
$\text{Out}^2(v)=\{v_2\}~($see Fig.\ref{fg-clm-1}(b)$)$. By 3) of
Corollary \ref{mrg-cor-1}, we have $\text{Out}(v_2)\neq\emptyset$.
Let $v_3\in\text{Out}(v_2)$. Note that by 4) of Corollary
\ref{mrg-cor-1}, $|\text{Out}(u)|\geq 2$ for any source
$u\in\text{In}(v_2)$ (see Fig. \ref{fg-clm-1}(b) as illustration),
which implies $(v_2,v_3)$ is not a green edge. So $(v_2,v_3)$ is a
blue edge belonging to $v$. Hence, for each $v\in C_1$, we can
always find a blue edge belonging to $v$.

By the above discussion, statement i) holds.

Let $(u',u'')$ be a blue edge and $S$ be the set of all $v\in
B\cup C_1$ such that $(u',u'')$ belongs to $v$. To prove statement
ii), we prove that there is an {\em injection}, namely $\varphi$,
from $S$ to $\text{In}(u')$. Then ii) follows from the fact that
$\text{In}(u')$ has size at most $r$. The injection of
$\varphi(v)$ can be constructed as follows: If $v\in B$, simply
let $\varphi(v)=v$. If $v\in C_1$, let $\varphi(v)=v'$, where
$\{v'\}=\text{Out}(v)$. It is easy to see that $\varphi(v)$ is an
injection (noticing 5) of Corollary \ref{mrg-cor-1}), which
completes the proof of statement ii).
\end{proof}

\begin{exam}
Consider the repair graph in Fig. \ref{eg-fg-4-1}. We have
$A=\{2,4,7\}$, $B=\{3,6\}$, $C_1=\{1\}$ and $C_2=\{5\}$, and the
edges with tails from 1 to 7 are red edges, as illustrated in Fig.
\ref{eg-fg-4}.

Moreover, one can check that $\mathcal
E_{\text{green}}(1)=\{(8,11)\}$ and $\mathcal
E_{\text{green}}(5)=\{(10,12), (10,13)\}$. As for blue edges,
since $1\in C_1$ and $11\in\text{Out}^2(1)$, then
$(11,14)\in\mathcal E_{\text{blue}}(1)$; Since
$11\in\text{Out}(6)$ and $6\in B$, then $(11,14)\in\mathcal
E_{\text{blue}}(6)$; Since $3\in B$ and $9\in\text{Out}(3)$, then
$(9,11)\in\mathcal E_{\text{blue}}(3)$. One can check that
$\mathcal E_{\text{blue}}(1)=\mathcal
E_{\text{blue}}(6)=\{(11,14)\}$ and $\mathcal
E_{\text{blue}}(3)=\{(9,11)\}$. The green edges and blue edges are
also illustrated in Fig. \ref{eg-fg-4}.
\end{exam}

\renewcommand\figurename{Fig}
\begin{figure}[htbp]
\begin{center}
\includegraphics[height=3.6cm]{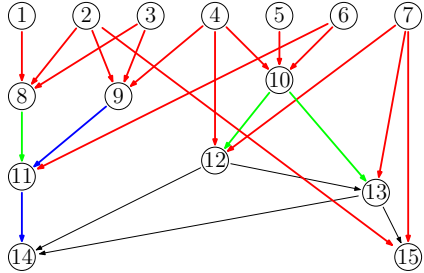}
\end{center}
\vspace{-0.2cm}\caption{Illustration of red edge, green edge and
blue edge of minimal repair graph.}\label{eg-fg-4}
\end{figure}

Now, we can propose our main theorem of this section.
\begin{thm}\label{bnd-t-3}
For $(n,k,r,3)$-SLRC, we have \footnote{In the original version
\cite{Wentu-15} of this paper, bound \eqref{rate-t-is-3} was
presented equivalently in terms of the code length as $n\geq
k+\left\lceil\frac{2k+\lceil\frac{k}{r}\rceil}{r}\right\rceil$.}
\begin{align}\label{rate-t-is-3} \frac{k}{n}\leq
\left(\frac{r}{r+1}\right)^2.\end{align}
\end{thm}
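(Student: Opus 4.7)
My plan is to combine the two basic inequalities $k \leq \delta^*$ (Lemma \ref{dim-dlt}) and $|\mathcal{E}| \leq (n-\delta^*)r$ (Lemma \ref{edge-num-left}) with a careful count of $|\mathcal{E}|$ that exploits all three edge colors already introduced on the minimal repair graph $G_{\lambda_0}$. The partition $\text{S}(G_{\lambda_0}) = A \cup B \cup C_1 \cup C_2$ (with total size $\delta^*$) is what allows the three local edge bounds \eqref{num-red-edge}, \eqref{num-green-edge}, \eqref{num-blue-edge} to be merged into a single global estimate.

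The crucial preliminary observation is that the red, green, and blue edges form three pairwise disjoint subsets of $\mathcal{E}$: red edges have a source as tail, whereas green and blue edges have inner vertices as tail (their tails lie in some $\text{Out}(v)$ or $\text{Out}^2(v)$ and hence have the source $v$ as an in-neighbor), and green versus blue are disjoint by the very definition of a blue edge. I would therefore add \eqref{num-red-edge}, \eqref{num-green-edge} and \eqref{num-blue-edge} to obtain a lower bound on $|\mathcal{E}|$ purely in terms of $|A|,|B|,|C_1|,|C_2|$, and then plug it into Lemma \ref{edge-num-left} to obtain an inequality of the shape
\begin{align*}
n \;\geq\; \delta^* + \frac{3|A|+2|B|+2|C_1|+3|C_2|}{r} + \frac{|B|+|C_1|}{r^2}.
\end{align*}

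The last step is an elementary algebraic comparison: after substituting $\delta^* = |A|+|B|+|C_1|+|C_2|$ and multiplying through by $r^2$, one checks that the right-hand side above dominates $\delta^*(r+1)^2$, with the slack collapsing to a nonnegative multiple of $|A|+|C_2|$ (reflecting the fact that having sources of out-degree $\geq 3$ or in class $C_2$ is wasteful with respect to the rate bound). Invoking $k \leq \delta^*$ then gives $k/n \leq (r/(r+1))^2$. The technical content has been front-loaded into Corollaries \ref{mrg-cor-1}, \ref{mrg-cor-2} and the blue-edge lower bound, so the remaining work is mostly bookkeeping; the one step I would be most careful about when writing the argument is the disjointness of the three edge classes, since an accidental double count would degrade the final rate bound.
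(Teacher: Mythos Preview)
Your proposal is correct and follows essentially the same approach as the paper: establish disjointness of the red, green, and blue edge classes, sum the three lower bounds \eqref{num-red-edge}--\eqref{num-blue-edge}, combine with Lemma~\ref{edge-num-left} and Lemma~\ref{dim-dlt}, and finish with elementary algebra. The only cosmetic difference is that the paper first simplifies the edge count to $|\mathcal{E}|\geq 2\delta^*+\delta^*/r$ (via $r|A|+r|C_2|+|B|+|C_1|\geq \delta^*$) before dividing by $r$, whereas you keep $|A|,|B|,|C_1|,|C_2|$ separate until the final comparison with $\delta^*(r+1)^2$; the slack term $(r-1)(|A|+|C_2|)\geq 0$ you identify is exactly the same inequality in disguise.
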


\begin{proof}
By definition, we can easily see that $\mathcal
E_{\text{red}},\mathcal E_{\text{green}}$ and $\mathcal
E_{\text{blue}}$ are mutually disjoint. Then by
\eqref{num-source}-\eqref{num-blue-edge}, we have
\begin{align*}
|\mathcal E|&\geq|\mathcal E_{\text{red}}|+|\mathcal
E_{\text{green}}|+|\mathcal E_{\text{blue}}|\\&\geq
(3|A|+2|B|+|C_1|+|C_2|)\\& ~ ~ ~ +(|C_1|+2|C_2|)+\frac{|B|+|C_1|}{r}\\
&=
2(|A|+|B|+|C_1|+|C_2|)\\& ~ ~ ~ +(|A|+|C_2|+\frac{|B|+|C_1|}{r})\\
&=
2\delta^*+\frac{r|A|+r|C_2|+|B|+|C_1|}{r}\\
&\geq
2\delta^*+\frac{|A|+|C_2|+|B|+|C_1|}{r}\\
&=2\delta^*+\frac{\delta^*}{r}.
\end{align*}
That is, $|\mathcal E|\geq 2\delta^*+\frac{\delta^*}{r}$.
Combining this with Lemma \ref{edge-num-left}, we have
\begin{align*}(n-\delta^*)r\geq |\mathcal E| \geq
2\delta^*+\frac{\delta^*}{r}.\end{align*}
So
\begin{align*}(n-\delta^*)r\geq
2\delta^*+\frac{\delta^*}{r}.\end{align*}
Solving $n$ from the above equation, we have
\begin{align}\label{add-t-is-3} n\geq\delta^*+\frac{2\delta^*+
\frac{\delta^*}{r}}{r} .\end{align} By Lemma \ref{dim-dlt},
$\delta^*\geq k$. So \eqref{add-t-is-3} implies that
\begin{align*}n&\geq k+\frac{2k+
\frac{k}{r}}{r}\\&=k\left(1+\frac{2}{r}+\frac{1}{r^2}\right)
\\&=k\left(\frac{r+1}{r}\right)^2.\end{align*}
Hence,
\begin{align*} \frac{k}{n}\leq
\left(\frac{r}{r+1}\right)^2,\end{align*} which proves the theorem.
\end{proof}

We will later construct two families of $(n,k,r,3)$-SLRCs
achieving \eqref{rate-t-is-3} and hence show the tightness of this
bound.

\subsection{Code Rate for $(n,k,r,2)$-SLRC}
In this subsection, we give a new proof of the bound
\eqref{rate-bd-3} for the $(n,k,r,2)$-SLRC using the similar
techniques as in Subsection B. Assume that $\mathcal C$ is an
$(n,k,r,2)$-SLRC and $G_{\lambda_0}=(\mathcal V,\mathcal E)$ is a
minimal repair graph of $\mathcal C$.

\begin{proof}[Proof of Bound \eqref{rate-bd-3}]
By Remark \ref{rem-t-1-2} and 1) of Corollary \ref{mrg-cor-1},
each source of $G_{\lambda_0}$ has at least one out-neighbor. Let
$A$ be the set of sources that has only one out-neighbor and let
$\mathcal E_{\text{red}}$ be the set of all edges $e$, called red
edges, such that the tail of $e$ is a source. Then the number of
red edges is
\begin{align}\label{2t-red}
|\mathcal
E_{\text{red}}|\geq|A|+2|\text{S}(G_{\lambda_0})\backslash
A|=2\delta^*-|A|.\end{align}

For each $v\in A$, let $v'$ be the unique out-neighbor of $v$ and
$\mathcal E_{\text{green}}(v)$ be the set of all edges whose tail
is $v'$. By Remark \ref{rem-t-1-2} and 2) of Corollary
\ref{mrg-cor-1}, $\text{Out}^2(v)=\text{Out}(v')\neq\emptyset$. So
$|\mathcal E_{\text{green}}(v)|=|\text{Out}(v')|\geq 1$. Let
$\mathcal E_{\text{green}}$ be the set of all green edges. For any
two different $v_1,v_2\in A$, let $v_1', v_2'$ be the unique
out-neighbor of $v_1,v_2$, respectively. By Remark \ref{rem-t-1-2}
and 5) of Corollary \ref{mrg-cor-1}, $v_1'\neq v_2'$. So $\mathcal
E_{\text{green}}(v_1)\cap\mathcal
E_{\text{green}}(v_2)=\emptyset$. Hence,
\begin{align}\label{2t-green}|\mathcal
E_{\text{green}}|=\left|\bigcup_{v\in A}\mathcal
E_{\text{green}}(v)\right|=\sum_{v\in A}\left|\mathcal
E_{\text{green}}(v)\right|\geq|A|.\end{align}

Clearly, $\mathcal E_{\text{red}}\cap \mathcal
E_{\text{green}}=\emptyset$. Then by \eqref{2t-red} and
\eqref{2t-green}, $$|\mathcal E|\geq|\mathcal
E_{\text{red}}|+|\mathcal E_{\text{green}}|\geq 2\delta^*.$$ On
the other hand, by Lemma \ref{edge-num-left},
$$(n-\delta^*)r\geq|\mathcal E|.$$ So
$(n-\delta^*)r\geq 2\delta^*$, which implies $n\geq
\delta^*+\frac{2\delta^*}{r}\geq k+\frac{2k}{r}~($Lemma
\ref{dim-dlt}$)$. Hence, $\frac{k}{n}\leq\frac{r}{r+2}$, which
completes the proof.
\end{proof}

\renewcommand\figurename{Fig}
\begin{figure*}[htbp]
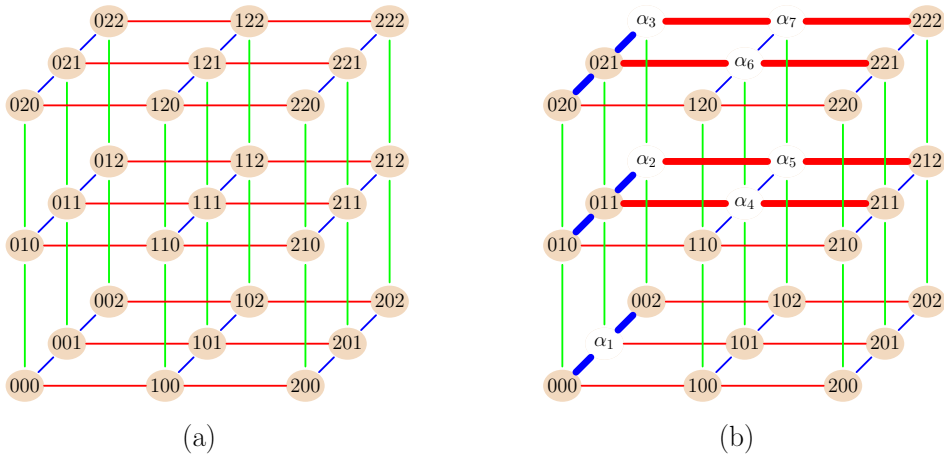

\begin{center}
\includegraphics[height=6cm]{itrust-fig.5}
\hspace{1.52cm}
\includegraphics[height=6cm]{itrust-fig.7}
\end{center}
\caption{(a) The index set of $\mathbb Z_3^3$, where
$(i_3,i_2,i_1)$ is simply written as $i_3i_2i_1$; (b) The
recovering sets $R_1,\!\cdots\!,R_7$ of
$\alpha_1,\!\cdots\!,\alpha_7$, where $R_1\!=\!\{(000),(002)\}$,
$R_2\!=\!\{(010),(011)\}$, $R_3\!=\!\{(020),(021)\}$,
$R_4\!=\!\{(011),(211)\}$, $R_5\!=\!\{\alpha_2,(212)\}$,
$R_6\!=\!\{(021),(221)\}$ and $R_7\!=\!\{\alpha_3,(222)\}$ drawn
in heavy lines.} \label{exam-code-1}
\end{figure*}

\section{An Example of SLRC}
In order to have a better understanding of the binary $(r,t)$-SLRC
constructed in the next section, we give an example in this
section. Let $r=2$, $m=3$ and $\mathcal C$ be the product code of
$m$ copies of the binary $[r+1, r]$ single parity check code. Then
$\mathcal C$ has length $n=(r+1)^3=27$ and dimension $k=r^3=8$. It
is convenient to use $\mathbb Z_{3}^3=\{(i_3,i_2,i_1);
i_1,i_2,i_3\in\mathbb Z_3\}$ instead of $[n]$ as the index set of
the coordinates of $\mathcal C$, and let $\mathbb
Z_{2}^3=\{(i_3,i_2,i_1); i_1,i_2,i_3\in\mathbb Z_2\}$ be the
information set of $\mathcal C$. Here, $\mathbb Z_3=\{0,1,2\}$ and
$\mathbb Z_2=\{0,1\}$ are simply viewed as two sets and $\mathbb
Z_2\subseteq \mathbb Z_3~($no algebraic meaning is considered
here$)$.

The index set $\mathbb Z_3^3$ is depicted in Fig.
\ref{exam-code-1}(a). By definition, each code symbol (coordinate)
of $\mathcal C$ can be recovered by all the other symbols on the
same (red, green or blue) line. Hence, each code symbol of
$\mathcal C$ has $m=3$ disjoint recovering sets (red, green and
blue) of size $r=2$, and $\mathcal C$ can recover any $3$ erasures
by parallel recovery. However, we can prove ({\em see details in
the next section}) that it can recover any $t=2^m-1=7$ erasures by
sequential recovery. For example, consider an erasure of $7$ code
symbols, say, $E=\{\alpha_1,\cdots,\alpha_7\}$, where
$\alpha_1=(001)$, $\alpha_2=(012)$, $\alpha_3=(022)$,
$\alpha_4=(111)$, $\alpha_5=(112)$, $\alpha_6=(121)$ and
$\alpha_7=(122)$, as illustrated in Fig. \ref{exam-code-1}(b). We
can select a sequence of recovering sets $R_1=\{(000),(002)\}$,
$R_2=\{(010),(011)\}$, $R_3=\{(020),(021)\}$,
$R_4=\{(011),(211)\}$, $R_5=\{\alpha_2,(212)\}$,
$R_6=\{(021),(221)\}$ and $R_7=\{\alpha_3,(222)\}~($see Fig.
\ref{exam-code-1}(b)$)$. It is easy to check that $R_1,\cdots,R_7$
sequentially repair $\{\alpha_1,\cdots,\alpha_7\}$.

In general, by puncturing $\mathcal C$ properly, we can obtain
$(r,t)$-SLRC for any $t\in\{1,2,\cdots,6\}$. As an example, we
construct an $(r,5)$-SLRC as follows. For each $j\in \mathbb
Z_2=\{0,1\}$, let
$$A_j=\{(j,i_2,i_1); i_2,i_1\in\mathbb Z_3\}$$ and let $$A=A_0\cup
A_1,$$
\begin{align*}B&=\{(2,i_2,i_1); i_2\in\mathbb Z_2 ~\text{and}~
i_1\in\mathbb
Z_3\}\\&=\{(200),(201),(202),(210),(211),(212)\}.\end{align*} Let
$\Omega=A\cup B$, as depicted in Fig. \ref{exam-code-2}. Then, the
punctured code $\mathcal C|_{\Omega}$ is an $(n',k,r,5)$-SLRC,
with $n'=|\Omega|=24$.

In fact, one can see that the following items hold.
\begin{description}
  \item[i)] For any nonempty $E\subseteq A$ of size
$|E|\leq t_1=3$, there exists an $\alpha\in E$ such that $\alpha$
has a recovering set $R\subseteq A\backslash E$.
  \item[ii)] For any nonempty $E\subseteq A$ of size
$|E|\leq t=5$, there exists an $\alpha\in E$ such that $\alpha$
has a recovering set $R\subseteq\Omega\backslash E$.
  \item[iii)] For any nonempty $E\subseteq B$ of size
$|E|\leq t_2=1$, there exists an $\alpha\in E$ such that $\alpha$
has a recovering set $R\subseteq B\backslash E$.
  \item[iv)] For any nonempty $E\subseteq B$ of size
$|E|\leq t=5$, there exists an $\alpha\in E$ such that $\alpha$
has a recovering set $R\subseteq\Omega\backslash E$.
\end{description}

In the above, items $i),iii),iv)$ can be easily verified. For
example, one can see that the punctured codes $\mathcal C|_{A_0}$
and $\mathcal C|_{A_1}$ are both $(r,3)$-SLRC and $\mathcal
C|_{B}$ is a $(r,1)$-SLRC, hence i) and iii) hold. From Fig.
\ref{exam-code-2}, one can see that each $(2,i_2,i_1)\in B$ has a
recovery set (red line) $R=\{(0,i_2,i_1), (1,i_2,i_1)\}\subseteq
A$, hence iv) holds. To prove ii), we consider the following two
cases:
\begin{enumerate}
  \item [1)] $E\subseteq A_0$ or $E\subseteq A_1$. Without loss of
generality, assume $E\subseteq A_1$. If
$E\subseteq\{(120),(121),(122)\}$, then each $(1,2,i)\in E$ has a
recovery set (green line) $R=\{(1,0,i),
(1,1,i)\}\subseteq\Omega\backslash E$; Otherwise, there exists a
$(1,i_2,i_1)\in E\cap\{(100),(101),(102),(110),(111),(112)\}$
which has a recovery set (red line) $R=\{(0,i_2,i_1),
(2,i_2,i_1)\}\subseteq\Omega\backslash E$.
   \item [2)] $E\cap
A_0\neq \emptyset$ and $E\cap A_1\neq \emptyset$. Since $|E|\leq
5$, then $|E\cap A_0|\leq 3$ or $|E\cap A_1|\leq 3$. Note that
both $\mathcal C|_{A_0}$ and $\mathcal C|_{A_1}$ are
$(r,3)$-SLRCs, by Lemma \ref{lem-ELRC}, there exists an $\alpha\in
E$ and $j\in\{0,1\}$ such that $\alpha$ has a recovering set
$R\subseteq A_j\backslash E\subseteq A\backslash
E\subseteq\Omega\backslash E$.
\end{enumerate}
Then, by Lemma \ref{lem-Com-LRC}, $\mathcal C|_{\Omega}$ is an
$(n',k,r,5)$-SLRC. The generalization of this example as well as
the formal proof will be given in the next section.

\renewcommand\figurename{Fig}
\begin{figure}[htbp]
\begin{center}
\includegraphics[height=5.45cm]{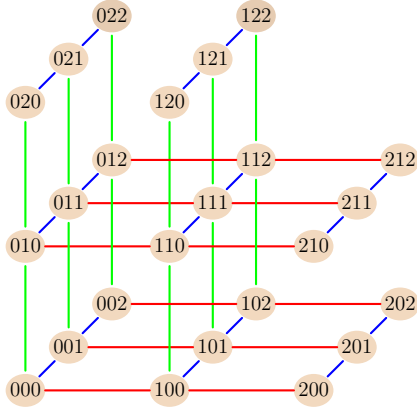}
\end{center}
\caption{Graphical illustration of a subset of $\mathbb Z_3^3$.}
\label{exam-code-2}
\end{figure}

\section{Construction of $(n,k,r,t)$-SLRC}
In this section, we construct a family of binary $(n,k,r,t)$-SLRC
for any positive integers $r~(\geq 2)$ and $t$. It will be shown
that the code rate of this family is greater than $\frac{r}{r+t}$,
and in particular, for $t\in\{2,3\}$, it achieves the bounds
\eqref{rate-bd-3} and \eqref{rate-t-is-3}, respectively.

We first need introduce some notations. For any positive integers
$r$ and $m$, where $r\geq 2$, let $\mathbb
Z_r=\{0,1,\cdots\!,r-1\}$ and $\mathbb
Z_r^m=\{(i_m,i_{m-1},\cdots\!,i_1);
i_m,i_{m-1},\cdots\!,i_1\in\mathbb Z_r\}$. Here, $\mathbb Z_r$ is
simply viewed as a set $($without any algebraic meaning$)$. So
$\mathbb Z_r\subseteq\mathbb Z_{r+1}=\{0,1,\cdots, r-1, r\}$. We
will use $\alpha$, $\beta$, $\gamma$, etc, to denote elements
(points) of $\mathbb Z_{r+1}^m$. Note that by the notation, for
each $\alpha=(i_m,i_{m-1},\cdots\!,i_1)\in\mathbb Z_r$ and
$\ell\in[m]$, $i_\ell$ is the $\ell$th coordinate of $\alpha$ from
the right.

For each $\alpha=(i_m,i_{m-1},\cdots\!,i_1)\in\mathbb Z_{r+1}^m$,
we let
\begin{align}\label{U-alf}
\text{\textbf{U}}^{(m)}(\alpha)=\{\ell\in[m]; i_\ell=r\},
\end{align} and
\begin{align}\label{T-alf}
\text{\textbf{T}}^{(m)}(\alpha)=\{\ell\in[m]; i_\ell\in\mathbb
Z_r\}.
\end{align}
Further, we let
\begin{align}\label{L-alf}
\mathcal
L^{(m)}(\alpha)\!=\!\{(j_m,j_{m-1},\cdots\!,j_1)\!\in\!\mathbb
Z_{r}^m; j_\ell\!=\!i_\ell, \forall \ell\!\in\!
\text{\textbf{T}}^{(m)}(\alpha)\}.\end{align}

Clearly,
$\text{\textbf{U}}^{(m)}(\alpha)\cap\text{\textbf{T}}^{(m)}
(\alpha)=\emptyset$ and $\text{\textbf{U}}^{(m)}(\alpha)\cup
\text{\textbf{T}}^{(m)}(\alpha)=[m]$. Moreover, for each
 $\alpha\in\mathbb Z_{r+1}^m\backslash\mathbb Z_{r}^m$,
 $\text{\textbf{U}}^{(m)}(\alpha)\neq\emptyset$ and $\mathcal
 L^{(m)}(\alpha)\neq\emptyset$. In particular, if
 $\alpha=(r,r,\cdots,r)$, then $\text{\textbf{U}}^{(m)}(\alpha)=[m]$
 and $\mathcal L^{(m)}(\alpha)=\mathbb Z_r^m$.

As an example, let $r=2$, $m=6$ and
$\alpha=(1,0,2,1,1,2)\in\mathbb Z_3^6$. Then
$\text{\textbf{U}}^{(m)}(\alpha)=\{4,1\}$,
$\text{\textbf{T}}^{(m)}(\alpha)=\{6,5,3,2\}$ and $\mathcal
L^{(m)}(\alpha)=\{(1,0,i_4,1,1,i_1); i_4,i_1\in\mathbb
Z_2\}=\{(1,0,0,1,1,0), (1,0,0,1,1,1), (1,0,1,1,1,0),
(1,0,1,1,1,1)\}$.

Note that for any integer $s$ such that $0\leq s\leq 2^{m}-1$, $s$
has a unique $m$-digit binary representation, say
$(\lambda_m\lambda_{m-1}\cdots\lambda_1)$. That is,
$(\lambda_m,\lambda_{m-1},\cdots\!,\lambda_1)\!\in\!\{0,1\}^m$ and
$s=\sum_{\ell=1}^m\lambda_{\ell}2^{\ell-1}$. Denote by
$\text{supp}_m(s)$ the support of
$(\lambda_m,\lambda_{m-1},\cdots\!,\lambda_1)$. Let
\begin{align}\label{def-GM}
\Gamma_s^{(m)}\!=\!\{\alpha\!\in\!\mathbb Z_{r+1}^m;
\text{\textbf{U}}^{(m)}(\alpha)\!=\!\text{supp}_m(s)\}\end{align}
and
\begin{align}\label{def-OM}
\Omega_s^{(m)}=\bigcup_{\ell=0}^s\Gamma_{\ell}^{(m)}.\end{align}

For example, suppose $r=2$, $m=6$ and $s=22$. Then $(010110)$ is
the unique $6$-digit binary representation of $s$ and
$\text{supp}_m(s)=\{5,3,2\}$. From \eqref{def-GM}, we have
$\Gamma_{22}^{(6)}=\{(i_6,2,i_4,2,2,i_1); i_6,i_4,i_1\in\mathbb
Z_2\}$.

Clearly, $\Gamma_0^{(m)}, \Gamma_1^{(m)}, \cdots,
\Gamma_{2^m-1}^{(m)}$ are mutually disjoint and
$|\Gamma_s^{(m)}|=r^{m-|\text{supp}_m(s)|}$, for $s=0,1,
 \cdots, 2^m-1$. In particular, $|\Gamma_0^{(m)}|=r^m$ and
$|\Gamma_{2^m-1}^{(m)}|=1$. Moreover, by definition, we have
$\Omega_0^{(m)}=\Gamma_0^{(m)}=\mathbb
 Z_r^m$ and $\Omega_{2^m-1}^{(m)}=\mathbb Z_{r+1}^m$.

For any positive integers $r~(\geq2)$ and $t$, we can always pick
an integer $m$ such that $t\leq 2^m-1$ and, using the above
notations, define a matrix $H_t^{(m)}=(h_{\alpha,\beta})$
satisfying the following two properties.
\begin{itemize}
 \item [(1)] The rows of $H_t^{(m)}$ are indexed by
 $\Omega_t^{(m)}\backslash\Omega_0^{(m)}$ and the columns of
 $H_t^{(m)}$ are indexed by $\Omega_t^{(m)}$; \
 \item [(2)] For each $\alpha\in\Omega_t^{(m)}\backslash\Omega_0^{(m)}$
 and $\beta\in\Omega_t^{(m)}$,
 \begin{equation}\label{def-H}
 h_{\alpha,\beta}=\left\{\begin{aligned}
 &1, ~ ~\text{if}~\beta\in\mathcal L^{(m)}(\alpha)\cup\{\alpha\};\\
 &0, ~ ~\text{Otherwise}.\\
 \end{aligned} \right.
 \end{equation}
 \end{itemize}

It should be noted that, $H_t^{(m)}=(h_{\alpha,\beta})$ is an
$h\times n$ {\em binary} matrix, where
$h=\left|\Omega_t^{(m)}\backslash\Omega_0^{(m)}\right|$,
$n=\left|\Omega_t^{(m)}\right|$. The sub-matrix of $H_t^{(m)}$,
formed by the columns indexed by
$\Omega_t^{(m)}\backslash\Omega_0^{(m)}$, is a permutation matrix.
Hence, $\text{rank}\left(H_t^{(m)}\right)=h$.

\begin{thm}\label{main-th}
Let $\mathcal C_t^{(m)}$ be the binary code that has a parity
check matrix $H_t^{(m)}$. Then $\mathcal C_t^{(m)}$ is an
$(n,k,r,t)$-SLRC with
\begin{align}\label{n-value}
n=r^m\sum_{s=0}^t\frac{1}{r^{|\text{supp}_m(s)|}}
\end{align}
and \begin{align}\label{k-value} k=r^m.
\end{align} Hence, the code rate of $\mathcal
C_t^{(m)}$ is
\begin{align}\label{n-k-value}
\frac{k}{n}=\frac{1}{\sum_{s=0}^t\frac{1}{r^{|\text{supp}_m(s)|}}},
\end{align}
where $r~(\geq 2)$ and $t$ are any positive integers and $m$ is
any integer satisfying $t\leq 2^m-1$.
\end{thm}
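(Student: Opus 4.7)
The computation of the length is immediate: $n=|\Omega_t^{(m)}|=\sum_{s=0}^{t}|\Gamma_s^{(m)}|=\sum_{s=0}^{t}r^{m-|\text{supp}_m(s)|}$, which gives \eqref{n-value}. For the dimension, I would observe that restricting $H_t^{(m)}$ to the columns indexed by $\Omega_t^{(m)}\setminus\mathbb{Z}_r^m$ produces (a permutation of) the identity matrix: for any row label $\alpha\in\Omega_t^{(m)}\setminus\mathbb{Z}_r^m$ one has $h_{\alpha,\alpha}=1$, and for any other $\alpha'\in\Omega_t^{(m)}\setminus\mathbb{Z}_r^m$ a nonzero $h_{\alpha,\alpha'}$ would force $\alpha'\in\mathcal{L}^{(m)}(\alpha)\subseteq\mathbb{Z}_r^m$, a contradiction. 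Hence $H_t^{(m)}$ has full row rank $n-r^m$ and $k=r^m$, as in \eqref{k-value}.

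The SLRC assertion is the substantive part, and I would prove it by induction on $m$, for every admissible $t\leq 2^m-1$. A preliminary identification simplifies matters: $\mathcal{C}_t^{(m)}$ is exactly the puncturing to $\Omega_t^{(m)}$ of the $m$-fold product code $P$ of $[r+1,r]$ with itself. This is because iterating the single-parity-check relation along the coordinates $\ell\in\text{\textbf{U}}^{(m)}(\alpha)$ yields $x_\alpha=\sum_{\beta\in\mathcal{L}^{(m)}(\alpha)}x_\beta$ for every $x\in P$, so each row of $H_t^{(m)}$ is a valid parity check of $P|_{\Omega_t^{(m)}}$, and the rank count above forces equality. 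A direct consequence is that each slice $A_j:=\{(j,\alpha')\in\Omega_t^{(m)}\}$ (for $j\in\mathbb{Z}_r$) satisfies $\mathcal{C}_t^{(m)}|_{A_j}\cong\mathcal{C}_{t_1}^{(m-1)}$ with $t_1:=\min(t,2^{m-1}-1)$, and, when $t\geq 2^{m-1}$, the top slice $B:=\{(r,\alpha')\in\Omega_t^{(m)}\}$ satisfies $\mathcal{C}_t^{(m)}|_B\cong\mathcal{C}_{t_2}^{(m-1)}$ with $t_2:=t-2^{m-1}$. Moreover, any axis-parallel line whose support lies entirely in $\Omega_t^{(m)}$ yields a size-$r$ recovery set for each of its points.

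The base case $m=1$ is the single-parity $[r+1,r]$ code, trivially $(r,1)$-SLRC. For the induction step with $t<2^{m-1}$, $B=\emptyset$ and $\mathcal{C}_t^{(m)}$ splits as the direct sum $\bigoplus_{j\in\mathbb{Z}_r}\mathcal{C}_t^{(m-1)}$, which is $(r,t)$-SLRC via Lemma~\ref{lem-ELRC} applied in whichever summand meets $E$. For the induction step with $2^{m-1}\leq t\leq 2^m-1$, I would apply Lemma~\ref{lem-Com-LRC} to the partition $\Omega_t^{(m)}=A\cup B$ with the $t_1,t_2$ above (so $t_1+t_2+1=t$). Conditions (1) and (3) of that lemma reduce immediately to the inductive hypothesis on $\mathcal{C}_{t_1}^{(m-1)}$ and $\mathcal{C}_{t_2}^{(m-1)}$ via Lemma~\ref{lem-ELRC}, and condition (4) is handled by the direction-$m$ line: for $\alpha=(r,\alpha')\in E\subseteq B$, the line $L_m(\alpha)=\{(j,\alpha'):j\in\{0,\dots,r\}\}$ sits in $\Omega_t^{(m)}$, with its $r$ non-$\alpha$ points lying in $A$, hence in $\Omega_t^{(m)}\setminus E$.

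The main obstacle is condition~(2): for any $\emptyset\neq E\subseteq A$ with $|E|\leq t$, exhibit some $\alpha\in E$ with a size-$\leq r$ recovery set in $\Omega_t^{(m)}\setminus E$. If some slice satisfies $0<|E\cap A_j|\leq t_1$, the inductive hypothesis on $\mathcal{C}_{t_1}^{(m-1)}$ supplies the recovery inside $A_j$. Otherwise every nonempty $|E\cap A_j|$ exceeds $t_1=2^{m-1}-1$, and the bound $|E|\leq t\leq 2t_1+1$ forces $E$ into a single slice $A_{j_0}$. If some $\alpha=(j_0,\alpha')\in E$ satisfies $\alpha'\in\Omega_{t_2}^{(m-1)}$, the direction-$m$ line $L_m(\alpha)$ again does the job. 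The remaining, delicate sub-case is when every $\alpha=(j_0,\alpha')\in E$ has $\alpha'\notin\Omega_{t_2}^{(m-1)}$, so that $s(\alpha')>t_2\geq 0$, where $s(\alpha')$ denotes the unique $s$ with $\alpha'\in\Gamma_s^{(m-1)}$; I would then pick $\alpha^*=(j_0,\alpha'^*)\in E$ maximizing $s(\alpha'^*)$, choose any $\ell\in\text{\textbf{U}}^{(m-1)}(\alpha'^*)\neq\emptyset$, and use the line $L_\ell(\alpha^*)\subseteq A_{j_0}\subseteq\Omega_t^{(m)}$, whose other $r$ points all have $s$-value $s(\alpha'^*)-2^{\ell-1}<s(\alpha'^*)$ and hence, by the maximality of $\alpha'^*$, lie outside $E$. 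This ``max-$s$'' step is where I expect the main difficulty of the proof to lie.
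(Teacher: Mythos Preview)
Your overall plan matches the paper's: compute $n$ and $k$ from the combinatorics of $\Omega_t^{(m)}$ and the block-triangular shape of $H_t^{(m)}$, then prove the SLRC property by induction on $m$ via Lemma~\ref{lem-Com-LRC} applied to the partition $A\cup B$. Your identification of $\mathcal C_t^{(m)}$ with the puncturing of the $m$-fold product code is a clean substitute for the paper's Lemma~\ref{Parity-proj}, and your separate handling of $t<2^{m-1}$ as a direct sum is a harmless variant of the paper's choice of the coordinate $m_0$ in Remark~\ref{prtn-n} (when $t\ge 2^{m-1}$ one has $m_0=m$ and the two coincide).

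There is, however, a genuine error in the step you yourself flag as delicate. Choosing $\alpha^*$ of \emph{maximal} $s$-value does not give what you claim: maximality of $s(\alpha'^*)$ over $E$ says only that points with $s$-value \emph{larger} than $s(\alpha'^*)$ lie outside $E$, whereas the $r$ neighbours on $L_\ell(\alpha^*)$ have $s$-value $s(\alpha'^*)-2^{\ell-1}$, which is \emph{smaller}. Nothing prevents those neighbours from lying in $E$. Concretely, take $r=2$, $m=3$, $t=4$ (so $t_1=3$, $t_2=0$), fix $j_0=0$, and let
\[
E=\{(0,0,2),\ (0,1,2),\ (0,2,0),\ (0,2,2)\}\subseteq A_0,
\]
whose $\alpha'$ parts have $s$-values $1,1,2,3$. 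Then $|E|=4>t_1$, every $s(\alpha')>t_2$, your $\alpha^*=(0,2,2)$, and for \emph{both} choices $\ell\in\text{\textbf{U}}^{(2)}((2,2))=\{1,2\}$ the line $L_\ell(\alpha^*)\setminus\{\alpha^*\}$ meets $E$: for $\ell=1$ it contains $(0,2,0)\in E$, and for $\ell=2$ it equals $\{(0,0,2),(0,1,2)\}\subseteq E$.

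The repair is to reverse the extremum: pick $\alpha^*$ \emph{minimising} $s(\alpha'^*)$ over $E$. Since every $s(\alpha')>t_2\ge 0$ the minimum is at least $1$, so $\text{\textbf{U}}^{(m-1)}(\alpha'^*)\neq\emptyset$; for any $\ell$ there, the neighbours on $L_\ell(\alpha^*)$ have $s$-value strictly below the minimum over $E$ and hence really lie outside $E$. This is precisely the content of the paper's Lemma~\ref{not-in-k}, and with that one-word fix your argument goes through.
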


\begin{rem}
We have some remarks about the construction.
\begin{itemize}
 \item [1)] The example codes given in the last
section are just $\mathcal C_t^{(m)}$ for $r=2$, $m=3$ and $t=7,5$
respectively. In general, for $t=2^m-1$, it is easy to check that
$\Omega_{2^m-1}^{(m)}=\mathbb Z_{r+1}^m$ and $\mathcal
C_{2^m-1}^{(m)}$ is the product of $m$ copies of the $[r+1,r]$
binary code. If $t<2^m-1$, then $\mathcal C_t^{(m)}$ is the
punctured code of $\mathcal C_{2^m-1}^{(m)}$ with respect to
$\Omega_t^{(m)}$.
 \item [2)] For $t\in\{2,3\}$, we can let $m=2$ and from
\eqref{n-k-value}, the code rates of our construction are
$\frac{r}{r+2}$ and $\left(\frac{r}{r+1}\right)^2$ respectively,
which are optimal according to \eqref{rate-bd-3} and
\eqref{rate-t-is-3}. For $t\geq 4$, by \eqref{n-k-value}, the code
rate of $\mathcal C_t^{(m)}$ is higher than $\frac{r}{r+t}$ for
all $r\geq 2$.
 \item [3)] It was shown in \cite{Tamo14} that
$\mathcal C_{2^m-1}^{(m)}$ has locality $r$ and availability $m$,
which implies that it can recover $m$ erasures with locality $r$
using the parallel approach. In contrast, by Theorem
\ref{main-th}, it can recover $t=2^m-1$ erasures with the same
locality when using the sequential approach, which is a
significant advantage of the product code for the sequential
recovery. In particular, the product of two copies of the
$[r+1,r]$ binary code is not optimal (in rate) among codes with
locality $r$ and availability $t=2$ \cite{Wang15}, but optimal
among $(r,t=3)$-SLRCs.
\end{itemize}
\end{rem}

In the rest of this section, we will prove Theorem \ref{main-th}.
To prove that $\mathcal C_t^{(m)}$ is an $(r,t)$-SLRC, we will
prove a more general claim, say, for any binary linear code
$\mathcal C$, if $\mathcal C$ has a parity check matrix $H$ which
contains all rows of $H_t^{(m)}~($not necessarily $H=H_t^{(m)})$,
then $\mathcal C$ is an $(r,t)$-SLRC. We first make some
clarifications on the construction by two simple remarks.

\begin{rem}\label{rem-pcm}
Let $\mathcal C$ be a binary linear code. If the code symbols of
$\mathcal C$ are indexed by $\Omega_t^{(m)}$, then, by
construction of $H_t^{(m)}$, $\mathcal C$ has a parity check
matrix which contains all rows of $H_t^{(m)}$ if and only if for
each $\alpha\in\Omega_{t}^{(m)}\backslash\Omega_{0}^{(m)}$,
\begin{align}\label{symbl-rltn}
x_\alpha=\sum_{\beta\in\mathcal
L^{(m)}(\alpha)}x_{\beta}.\end{align} If the code symbols of
$\mathcal C$ are indexed by $S$, where $S\neq\Omega_t^{(m)}$, then
$\mathcal C$ has a parity check matrix which contains all rows of
$H_t^{(m)}$ if and only if there is a bijection $\psi:
\Omega_t^{(m)}\rightarrow S$ such that for each
$\alpha\in\Omega_{t}^{(m)}\backslash\Omega_{0}^{(m)}$,
\begin{align}\label{symbl-rltn-not-OM}
x_{\psi(\alpha)}=\sum_{\beta\in\mathcal
L^{(m)}(\alpha)}x_{\psi(\beta)}.\end{align}
\end{rem}

\begin{rem}\label{prtn-n}
Since $1\leq t\leq 2^m-1$, we can find a $m_0\in[m]$ such that
$2^{m_0-1}-1<t\leq2^{m_0}-1$. Let $t_1=2^{m_0-1}-1$ and
$t_2=t-t_1-1$. Then $0\leq t_2\leq t_1\leq 2^{m-1}-1$ and
$\Omega_{t}^{(m)}$ can be partitioned into two disjoint nonempty
subsets
$$A=\Omega_{t_1}^{(m)}=\bigcup_{s=0}^{t_1}\Gamma_{s}^{(m)}$$ and
$$B=\Omega_{t}^{(m)}\backslash A
=\bigcup_{s=t_1+1}^t\Gamma_{s}^{(m)}.$$ Moreover, noticing that
$\text{supp}_m(s)\subseteq\{1,2,\cdots\!,m_0-1\}$ for $0\leq s\leq
t_1$, then $A$ can be partitioned into $r$ mutually disjoint
nonempty subsets, according to the values of the $m_0$th
coordinate $($from the right$)$ of its elements, as follows.
$$A_j=\{(i_m,i_{m-1},\cdots,i_1)\in A;~ i_{m_0}\!=j\},~ \forall
j\in\mathbb Z_r.$$
\end{rem}

In the following, if there is no other specification, we always
assume that $\mathcal C$ is a binary linear code and has a parity
check matrix which contains all rows of $H_t^{(m)}$. Without loss
of generality, we assume that the code symbols of $\mathcal C$ are
indexed by $\Omega_t^{(m)}$. To prove Theorem \ref{main-th}, we
need the following three lemmas.

\begin{lem}\label{Parity-proj}
Suppose $m>1$. With notations in Remark \ref{prtn-n}, the
following hold.
\begin{itemize}
 \item [1)] For each $j\!\in\!\mathbb Z_r$, the punctured
 code $\mathcal C|_{A_j}$ has a parity check matrix which contains
 all rows of $H_{t_1}^{(m-1)}$.
 \item [2)] If $t_2\geq 1$, the punctured code $\mathcal C|_B$
 has a parity check matrix which contains all rows of
 $H_{t_2}^{(m-1)}$.
\end{itemize}
\end{lem}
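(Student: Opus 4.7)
The plan is to exhibit explicit bijections between $A_j$ and $\Omega_{t_1}^{(m-1)}$ for part 1, and between $B$ and $\Omega_{t_2}^{(m-1)}$ for part 2, and then, by invoking Remark \ref{rem-pcm}, derive the required parity relations from those encoded in $H_t^{(m)}$.

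For part 1, define $\psi_j:\Omega_{t_1}^{(m-1)}\to A_j$ by inserting $j$ at position $m_0$; since for $s\leq t_1$ one has $\text{supp}_m(s)\subseteq\{1,\ldots,m_0-1\}$, the image lies in $A_j$ and $\psi_j$ is bijective. Fix $\alpha'\in\Omega_{t_1}^{(m-1)}\setminus\Omega_0^{(m-1)}$ and let $\alpha=\psi_j(\alpha')$; then $\alpha\in\Omega_{t_1}^{(m)}\setminus\Omega_0^{(m)}$, so the row of $H_t^{(m)}$ indexed by $\alpha$ gives the relation $x_\alpha=\sum_{\beta\in\mathcal L^{(m)}(\alpha)}x_\beta$ in $\mathcal C$. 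Since $i_{m_0}=j\in\mathbb Z_r$ forces $m_0\in\text{\textbf{T}}^{(m)}(\alpha)$, every $\beta\in\mathcal L^{(m)}(\alpha)$ has $m_0$-th coordinate equal to $j$, so $\mathcal L^{(m)}(\alpha)\subseteq A_j$. A direct comparison of the defining conditions shows that $\psi_j$ restricts to a bijection $\mathcal L^{(m-1)}(\alpha')\leftrightarrow\mathcal L^{(m)}(\alpha)$, yielding $x_{\psi_j(\alpha')}=\sum_{\beta'\in\mathcal L^{(m-1)}(\alpha')}x_{\psi_j(\beta')}$, which is exactly what Remark \ref{rem-pcm} requires for $\mathcal C|_{A_j}$.

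For part 2, the key observation is that for every $\alpha\in\Gamma_s^{(m)}$ with $s\in\{t_1+1,\ldots,t\}$ one has $s\geq 2^{m_0-1}$ and $s\leq 2^{m_0}-1$, so $m_0\in\text{supp}_m(s)$ and hence $i_{m_0}=r$. Thus $\phi:B\to\Omega_{t_2}^{(m-1)}$ defined by \emph{deleting} the $m_0$-th coordinate is well defined; the image of $\Gamma_s^{(m)}$ is $\Gamma_{s-2^{m_0-1}}^{(m-1)}$, so $\phi$ is bijective. The obstacle is that for $\alpha=\phi^{-1}(\alpha')\in B$ the relation $x_\alpha=\sum_{\beta\in\mathcal L^{(m)}(\alpha)}x_\beta$ involves only symbols in $\mathbb Z_r^m\subseteq A$, not symbols of $\mathcal C|_B$. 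The idea is to eliminate those $A$-symbols by adding auxiliary parity relations. Precisely, for each $\beta'\in\mathcal L^{(m-1)}(\alpha')$ the element $\phi^{-1}(\beta')\in\Gamma_{t_1+1}^{(m)}\subseteq B$ contributes the row $x_{\phi^{-1}(\beta')}=\sum_{\delta\in\mathcal L^{(m)}(\phi^{-1}(\beta'))}x_\delta$, a binary sum over the $r$ lifts of $\beta'$ obtained by placing each $j\in\mathbb Z_r$ at position $m_0$.

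Summing those $|\mathcal L^{(m-1)}(\alpha')|$ relations over $\beta'$ gives $\sum_{\beta'}x_{\phi^{-1}(\beta')}=\sum_{\beta'}\sum_{j\in\mathbb Z_r}x_{\gamma_j(\beta')}$, where $\gamma_j(\beta')\in\mathbb Z_r^m$ inserts $j$ at position $m_0$ of $\beta'$. I would then verify the core bookkeeping step: the map $(\beta',j)\mapsto\gamma_j(\beta')$ is a bijection $\mathcal L^{(m-1)}(\alpha')\times\mathbb Z_r\to\mathcal L^{(m)}(\alpha)$. This follows because $m_0\notin\text{\textbf{T}}^{(m)}(\alpha)$ (since $i_{m_0}=r$), so the $m_0$-th coordinate of $\beta\in\mathcal L^{(m)}(\alpha)$ is free in $\mathbb Z_r$, while the matching conditions on the remaining positions coincide (up to the coordinate shift) with those defining $\mathcal L^{(m-1)}(\alpha')$. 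Hence $\sum_{\beta'}x_{\phi^{-1}(\beta')}=\sum_{\beta\in\mathcal L^{(m)}(\alpha)}x_\beta=x_\alpha$, which is the relation prescribed by $H_{t_2}^{(m-1)}$ via Remark \ref{rem-pcm}. The only real difficulty is the combinatorial verification of this last bijection; the rest of the argument is tracking the effect of the insertion/deletion of the $m_0$-th coordinate on $\text{\textbf{U}}^{(\cdot)}$ and $\text{\textbf{T}}^{(\cdot)}$.
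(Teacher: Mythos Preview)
Your proof is correct and follows essentially the same route as the paper: both parts use the coordinate-insertion map $\psi_j$ (your $\phi^{-1}$ in part~2 is the paper's $\psi_r$) together with Remark~\ref{rem-pcm}. Your treatment of part~2 is in fact more careful than the paper's, which simply asserts the identity $x_{\psi_r(\alpha)}=\sum_{\beta\in\mathcal L^{(m-1)}(\alpha)}x_{\psi_r(\beta)}$ as ``similar to 1)''; you correctly observe that the single row of $H_t^{(m)}$ indexed by $\psi_r(\alpha')$ involves only $A$-symbols, and you derive the needed $B$-relation by summing in the auxiliary rows indexed by $\psi_r(\beta')$, $\beta'\in\mathcal L^{(m-1)}(\alpha')$---this is exactly the verification the paper suppresses.
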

\begin{proof}
For each $j\in\mathbb Z_{r+1}=\{0,1,\cdots,r\}$, let
$$\psi_j: \mathbb Z_{r+1}^{m-1}\rightarrow \mathbb Z_{r+1}^{m}$$
be such that $\psi_j(\alpha)\!=\!(i_{m-1},\!\cdots\!,i_{m_0}, j,
i_{m_0-1},\!\cdots\!, i_{1})$ for each $\alpha\!\in\!(i_{m-1},
\!\cdots\!,i_{m_0}, i_{m_0-1},\!\cdots\!, i_{1})\!\in\!\mathbb
Z_{r+1}^{m-1}$. That is, $\psi_j(\alpha)$ is obtained by inserting
$j$ as a coordinate between the $(m_0\!-\!1)$th and $m_0$th
coordinate $($from the right$)$ of $\alpha$.

1) For each $j\in\mathbb Z_r$,
it is a mechanical work to check that $\psi_j$ induces a bijection
between $\Omega_{t_1}^{(m-1)}$ and $A_j~$ such that for each
$\alpha\in\Omega_{t_1}^{(m-1)}\backslash\Omega_{0}^{(m-1)}$,
$$\mathcal L^{(m)}(\psi_j(\alpha))=\{\psi_j(\beta); \beta\in\mathcal
L^{(m-1)}(\alpha)\}.$$ Since $\mathcal C$ has a parity check
matrix containing all rows of $H_t^{(m)}$, then by
\eqref{symbl-rltn}, we have
\begin{align*}x_{\psi_j(\alpha)}&=\sum_{\beta'\in\mathcal
L^{(m)}(\psi_j(\alpha))}x_{\beta'}\\&=\sum_{\beta\in\mathcal
L^{(m-1)}(\alpha)}x_{\psi_j(\beta)}.\end{align*} Hence, by Remark
\ref{rem-pcm}, $\mathcal C|_{A_j}$ has a parity check matrix which
contains all rows of $H_{t_1}^{(m-1)}$.

2) Recall that $t_1=2^{m_0-1}-1$. Then for each
$s\in\{t_1+1,t_1+2,\cdots, t\}$, we have
$$\text{supp}_m(s)=\text{supp}_{m-1}(s')\cup\{m_0\},$$ where
$s'=s-2^{m_0-1}\in\{0,1,\cdots,t_2\}$. So similar to 1), we can
check that $\psi_r$ induces a bijection between
$\Omega_{t_2}^{(m-1)}$ and
$B=\Omega_{t}^{(m)}\backslash\Omega_{t_1}^{(m)}$ such that for
each $\alpha\in\Omega_{t_2}^{(m-1)}\backslash\Omega_{0}^{(m-1)}$,
$$x_{\psi_r(\alpha)}=\sum_{\beta\in\mathcal
L^{(m-1)}(\alpha)}x_{\psi_r(\beta)}.$$ Hence, by Remark
\ref{rem-pcm}, $\mathcal C|_B$ has a parity check matrix which
contains all rows of $H_{t_2}^{(m-1)}$.
\end{proof}

For each $\alpha\!=\!(i_m,i_{m-1},\cdots\!,i_1)\!\in\!\mathbb
Z_{r+1}^m$ and $\ell\!\in\![m]$, let \begin{align}\label{mL-alf}
L^{(\ell)}_\alpha\!=\!\{(i_m,\cdots\!,i_{\ell+1}, i_\ell',
i_{\ell-1},\cdots\!, i_1); i_{\ell}'\!\in\!\mathbb Z_{r+1}\}.
\end{align} That
is, $L^{(\ell)}_\alpha$ consists of $\alpha$ as well as the points
in $\mathbb Z_{r+1}^m$ which differs from $\alpha$ only at
the $\ell$th coordinate $($from the right$)$. 

\begin{lem}\label{rp-set}
For each $\alpha\in\Omega_t^{(m)}$ and $\ell\in[m]$, if
$L^{(\ell)}_\alpha\subseteq\Omega_t^{(m)}$, then
$R=L^{(\ell)}_\alpha\backslash\{\alpha\}$ is a recovering set of
$\alpha$.
\end{lem}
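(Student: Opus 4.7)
The plan is to establish the stronger single identity
\begin{equation*}
\sum_{\beta \in L^{(\ell)}_\alpha} x_\beta = 0 \quad \text{in } \mathbb{F}_2
\end{equation*}
for every codeword $x \in \mathcal{C}$, from which the lemma follows immediately: since $\mathcal{C}$ is binary and $\alpha \in L^{(\ell)}_\alpha$ by definition \eqref{mL-alf}, rearranging gives $x_\alpha = \sum_{\beta \in R} x_\beta$ with $R = L^{(\ell)}_\alpha \setminus \{\alpha\}$ of size $r$ and all coefficients equal to $1 \in \mathbb{F}_2 \setminus \{0\}$, which is precisely the definition of a recovering set of $\alpha$.

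To prove the identity, I would decompose the line as $L^{(\ell)}_\alpha = \{\alpha^*\} \cup \{\alpha_j : j \in \mathbb{Z}_r\}$ (disjoint union), where $\alpha^*$ is the unique point of $L^{(\ell)}_\alpha$ whose $\ell$th coordinate equals $r$ and $\alpha_j$ is the point whose $\ell$th coordinate equals $j$. Because $\ell \in \text{\textbf{U}}^{(m)}(\alpha^*)$ and $\alpha^* \in L^{(\ell)}_\alpha \subseteq \Omega_t^{(m)}$, the row of $H_t^{(m)}$ indexed by $\alpha^*$ is available and, via Remark~\ref{rem-pcm} and \eqref{symbl-rltn}, gives
\begin{equation*}
x_{\alpha^*} = \sum_{\gamma \in \mathcal{L}^{(m)}(\alpha^*)} x_\gamma.
\end{equation*}
I would then split into two cases. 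When $\text{\textbf{U}}^{(m)}(\alpha^*) = \{\ell\}$ (equivalently $\alpha \in \mathbb{Z}_r^m$), definition \eqref{L-alf} directly yields $\mathcal{L}^{(m)}(\alpha^*) = \{\alpha_j : j \in \mathbb{Z}_r\}$ and the identity is immediate. Otherwise $\text{\textbf{U}}^{(m)}(\alpha^*) \setminus \{\ell\} \neq \emptyset$, and every $\alpha_j$ still carries some $r$-coordinate, so $\alpha_j \in \Omega_t^{(m)} \setminus \Omega_0^{(m)}$ (using $L^{(\ell)}_\alpha \subseteq \Omega_t^{(m)}$ again); the parity row indexed by $\alpha_j$ then applies and yields $x_{\alpha_j} = \sum_{\gamma \in \mathcal{L}^{(m)}(\alpha_j)} x_\gamma$.

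The key bookkeeping step, and the only real obstacle, is the partition
\begin{equation*}
\mathcal{L}^{(m)}(\alpha^*) = \bigcup_{j \in \mathbb{Z}_r} \mathcal{L}^{(m)}(\alpha_j) \quad \text{(disjoint)},
\end{equation*}
which follows by unwinding \eqref{L-alf}: $\mathcal{L}^{(m)}(\alpha^*)$ is obtained by letting each coordinate in $\text{\textbf{U}}^{(m)}(\alpha^*)$ range over $\mathbb{Z}_r$, while $\mathcal{L}^{(m)}(\alpha_j)$ fixes the $\ell$th coordinate to $j \in \mathbb{Z}_r$ and lets the remaining ones in $\text{\textbf{U}}^{(m)}(\alpha^*) \setminus \{\ell\}$ range over $\mathbb{Z}_r$. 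Substituting this partition into the parity relation for $\alpha^*$ collapses the double sum to $\sum_j x_{\alpha_j}$, giving $x_{\alpha^*} + \sum_{j \in \mathbb{Z}_r} x_{\alpha_j} = 0$ in $\mathbb{F}_2$, which is the desired identity.
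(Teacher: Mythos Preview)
Your proof is correct and follows essentially the same route as the paper: decompose $L^{(\ell)}_\alpha$ into the point $\alpha_r$ (your $\alpha^*$) with $\ell$th coordinate $r$ and the points $\alpha_0,\dots,\alpha_{r-1}$, apply the parity relation \eqref{symbl-rltn} to $\alpha_r$, use the disjoint partition $\mathcal L^{(m)}(\alpha_r)=\bigcup_{j=0}^{r-1}\mathcal L^{(m)}(\alpha_j)$, and collapse the inner sums to $x_{\alpha_j}$. The only cosmetic difference is that you split into two cases according to whether $\text{\textbf{U}}^{(m)}(\alpha^*)=\{\ell\}$, whereas the paper handles both at once by observing that $\sum_{\beta\in\mathcal L^{(m)}(\alpha_j)}x_\beta=x_{\alpha_j}$ holds uniformly (via \eqref{symbl-rltn} when $\alpha_j\notin\Omega_0^{(m)}$, and trivially since $\mathcal L^{(m)}(\alpha_j)=\{\alpha_j\}$ when $\alpha_j\in\Omega_0^{(m)}$); your case split is thus unnecessary but harmless.
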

\begin{proof}
Let
$$\alpha=(i_m,\cdots,i_{\ell+1},i_\ell,i_{\ell-1},\cdots,i_1).$$
Then by \eqref{mL-alf},
$$L^{(\ell)}_\alpha=\{\alpha_0,\alpha_1,\cdots,\alpha_r\},$$ where
$\alpha_j=(i_m,\cdots,i_{\ell+1},j,i_{\ell-1},\cdots,i_1)$ for
each $j\in\mathbb Z_{r+1}$ and $\alpha=\alpha_{i_\ell}$.

From \eqref{L-alf}, it is easy to see that
\begin{align}\label{mult-layer-set-eq2}
\mathcal L^{(m)}(\alpha_r)=\bigcup_{j=0}^{r-1}\mathcal
L^{(m)}(\alpha_j)\end{align} and for distinct $j_1, j_2\in\mathbb
Z_r$,
\begin{align}\label{mult-layer-set-eq3}
\mathcal L^{(m)}(\alpha_{j_1})\cap \mathcal
L^{(m)}(\alpha_{j_1})=\emptyset.\end{align} So combining
\eqref{symbl-rltn}, \eqref{mult-layer-set-eq2} and
\eqref{mult-layer-set-eq3}, we have
\begin{align*}
x_{\alpha_r}&=\sum_{\beta\in\mathcal
L^{(m)}(\alpha_r)}x_{\beta}\\&=\sum_{j=0}^{r-1}\left(\sum_{\beta\in\mathcal
L^{(m)}(\alpha_j)}x_{\beta}\right)\\&=\sum_{j=0}^{r-1}x_{\alpha_j}\end{align*}
which is equivalent to $($noticing that $\mathcal C$ is a binary
code$)$
$$x_\alpha=\sum_{\beta\in
L^{(\ell)}_\alpha\backslash\{\alpha\}}x_\beta.$$ Note that from
\eqref{mL-alf}, $L^{(\ell)}_\alpha$ has size $r+1$. So
$R=L^{(\ell)}_\alpha\backslash\{\alpha\}$ has size $r$, hence is a
recovering set of $\alpha$.
\end{proof}

\begin{lem}\label{not-in-k}
For any nonempty
$E\subseteq\Omega_{t}^{(m)}\backslash\Omega_{0}^{(m)}$, there
exists an $\alpha\in E$ which has a recovering set
$R\subseteq\Omega_{t}^{(m)}\backslash E$.
\end{lem}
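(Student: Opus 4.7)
The plan is to exhibit an $\alpha\in E$ together with a single coordinate $\ell\in\text{\textbf{U}}^{(m)}(\alpha)$ whose associated line $L^{(\ell)}_{\alpha}$ meets $E$ only at $\alpha$, so that Lemma~\ref{rp-set} immediately supplies the desired recovering set. The right choice of $\alpha$ is the one of \emph{minimum index} in $E$, where by the index of $\alpha$ I mean the unique $s(\alpha)\in\{1,\dots,2^{m}-1\}$ with $\alpha\in\Gamma^{(m)}_{s(\alpha)}$ --- equivalently, the integer whose $m$-digit binary representation has support $\text{\textbf{U}}^{(m)}(\alpha)$.

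First I would fix such an $\alpha$ and any $\ell\in\text{\textbf{U}}^{(m)}(\alpha)$; both choices are possible because $E\subseteq\Omega^{(m)}_{t}\setminus\Omega^{(m)}_{0}$ forces $s(\alpha)\ge 1$ and hence $\text{\textbf{U}}^{(m)}(\alpha)\ne\emptyset$. Next, using the explicit description \eqref{mL-alf}, every $\beta\in L^{(\ell)}_{\alpha}\setminus\{\alpha\}$ is obtained from $\alpha$ by replacing its $\ell$th coordinate (which equals $r$, since $\ell\in\text{\textbf{U}}^{(m)}(\alpha)$) with some value in $\mathbb{Z}_{r}$; consequently $\text{\textbf{U}}^{(m)}(\beta)=\text{\textbf{U}}^{(m)}(\alpha)\setminus\{\ell\}$ and $s(\beta)=s(\alpha)-2^{\ell-1}$. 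In particular $0\le s(\beta)<s(\alpha)\le t$, so $\beta\in\Omega^{(m)}_{t}$, which verifies the hypothesis $L^{(\ell)}_{\alpha}\subseteq\Omega^{(m)}_{t}$ of Lemma~\ref{rp-set}.

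The minimality of $s(\alpha)$ over $E$ is then used exactly once: since $s(\beta)<s(\alpha)$ for every $\beta\in L^{(\ell)}_{\alpha}\setminus\{\alpha\}$, no such $\beta$ can lie in $E$, whence $L^{(\ell)}_{\alpha}\setminus\{\alpha\}\subseteq\Omega^{(m)}_{t}\setminus E$. Lemma~\ref{rp-set} then declares $R:=L^{(\ell)}_{\alpha}\setminus\{\alpha\}$ a recovering set of $\alpha$, completing the argument. The only real subtlety is recognising the correct total order on $E$: toggling a single $r$-coordinate along $L^{(\ell)}_{\alpha}$ drops $s$ by exactly $2^{\ell-1}$ and so drags the rest of the line into a strictly lower $\Gamma^{(m)}$-stratum, which is what simultaneously keeps the line inside $\Omega^{(m)}_{t}$ and forces every other point on it to fall outside $E$. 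Once this observation is made, no further obstacle remains.
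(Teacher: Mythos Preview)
Your proof is correct and follows essentially the same approach as the paper: both select an $\alpha\in E$ lying in the lowest-indexed stratum $\Gamma^{(m)}_{s}$ that meets $E$, pick any $\ell\in\text{\textbf{U}}^{(m)}(\alpha)=\text{supp}_m(s)$, observe that the remaining points of $L^{(\ell)}_{\alpha}$ all lie in $\Gamma^{(m)}_{s-2^{\ell-1}}\subseteq\Omega^{(m)}_{t}$ and hence avoid $E$ by minimality, and then invoke Lemma~\ref{rp-set}. Your presentation is in fact slightly more explicit in computing $s(\beta)=s(\alpha)-2^{\ell-1}$, but the underlying argument is identical.
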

\begin{proof}
Let $s$ be the smallest number such that
$E\cap\Gamma_{s}^{(m)}\neq\emptyset$. Since
$E\subseteq\Omega_{t}^{(m)}\backslash\Omega_{0}^{(m)}$, then
$s\geq 1$ and $\text{supp}_m(s)\neq\emptyset$. Hence, we can
always find a $\ell\in\text{supp}_m(s)$ and a $s'<s$ such that
\begin{align}\label{not-in-k-eq1}
\text{supp}_m(s)=\text{supp}_m(s')\cup\{\ell\}.\end{align}

Pick $\alpha\in E\cap\Gamma_{s}^{(m)}$. Then by \eqref{def-GM},
$\text{\textbf{U}}^{(m)}(\alpha)=\text{supp}_m(s).$ Further, by
\eqref{mL-alf} and \eqref{not-in-k-eq1},
$\text{\textbf{U}}^{(m)}(\beta)=\text{supp}_m(s')$ for each
$\beta\in L^{(\ell)}_\alpha\backslash\{\alpha\}$. Then again by
\eqref{def-GM}, we have
\begin{align}\label{not-in-k-eq2}
L^{(\ell)}_\alpha\backslash\{\alpha\}\subseteq\Gamma_{s'}^{(m)}.
\end{align}
Since $s'<s$ and $s$ is the smallest number such that
$E\cap\Gamma_{s}^{(m)}\neq\emptyset$, then
$E\cap\Gamma_{s'}^{(m)}=\emptyset$. Hence,
$$L^{(\ell)}_\alpha\backslash\{\alpha\}\subseteq
\Gamma_{s'}^{(m)}\backslash E\subseteq\Omega_{s'}^{(m)} \backslash
E\subseteq\Omega_{t}^{(m)} \backslash E,$$ and by Lemma
\ref{rp-set}, $R=L^{(\ell)}_\alpha\backslash\{\alpha\}$ is a
recovering set of $\alpha$.
\end{proof}

Now, we can prove Theorem \ref{main-th}.
\begin{proof}[Proof of Theorem \ref{main-th}]
By the construction, it is easy to see that the code length of
$\mathcal C_t^{(m)}$ is
\begin{align*}
n&=\left|\Omega_t^{(m)}\right|\nonumber\\
&=\sum_{s=0}^t\left|\Gamma_s^{(m)}\right|\nonumber\\
&=\sum_{s=0}^tr^{m-|\text{supp}_m(s)|}\nonumber\\
&=r^m\sum_{s=0}^t\frac{1}{r^{|\text{supp}_m(s)|}},
\end{align*}
and the dimension of $\mathcal C_t^{(m)}$ is
$$k=\left|\Omega_0^{(m)}\right|=r^m.$$
So the code rate is
\begin{align*}
\frac{k}{n}=\frac{1}{\sum_{s=0}^t\frac{1}{r^{|\text{supp}_m(s)|}}}.
\end{align*}

We then need to prove that $\mathcal C_t^{(m)}$ is a $(r,t)$-SLRC.
It is sufficient to prove that for any binary linear code
$\mathcal C$, if $\mathcal C$ has a parity check matrix containing
all rows of $H_t^{(m)}$, then $\mathcal C$ is an $(r,t)$-SLRC. We
will prove this by induction on $m$.

First, for $m=1$, since $1\leq t\leq 2^m-1$, we have $t=1$. By
\eqref{def-GM} and \eqref{def-OM}, $\Gamma_0^{(1)}=\mathbb Z_r$,
$\Gamma_1^{(1)}=\{r\}$ and $\Omega_1^{(1)}=\mathbb Z_{r+1}$. So
$$H_1^{(1)}=(1,1,\cdots,1)_{1\times (r+1)}.$$
Clearly, the binary linear code $\mathcal C$ with parity check
matrix containing $H_1^{(1)}$ is a $(r,1)$-SLRC.

Now, suppose $m>1$ and the induction assumption holds for all
$m'<m$ and $t'\leq 2^{m'}-1$. We consider $m$ and $t\leq 2^{m}-1$.
Using the same notations as in Remark \ref{prtn-n}, we have the
following four claims.
\begin{description}
  \item[i)] For any nonempty $E\subseteq A$ of size
$|E|\leq t_1$, there exists an $\alpha\in E$ such that $\alpha$
has a recovering set $R\subseteq A\backslash E$.
  \item[ii)] For any nonempty $E\subseteq A$ of size
$|E|\leq t$, there exists an $\alpha\in E$ such that $\alpha$ has
a recovering set $R\subseteq\Omega_t^{(m)}\backslash E$.
  \item[iii)] For any nonempty $E\subseteq B$ of size
$|E|\leq t_2$, there exists an $\alpha\in E$ such that $\alpha$
has a recovering set $R\subseteq B\backslash E$.
  \item[iv)] For any nonempty $E\subseteq B$ of size
$|E|\leq t$, there exists an $\alpha\in E$ such that $\alpha$ has
a recovering set $R\subseteq\Omega_t^{(m)}\backslash E$.
\end{description}

We will prove them one by one as follows.

i): Since $E\!\subseteq\!A$ and, by Remark
\ref{prtn-n}, $A\!=\!\bigcup_{j=0}^{r-1}A_j$, then
$E\!\cap\!A_{j_0}\!\neq\!\emptyset$ for some $j_0\!\in\!\mathbb
Z_r$. By 1) of Lemma \ref{Parity-proj}, $\mathcal C|_{A_{j_0}}$
has a parity check matrix containing all rows of
$H_{t_1}^{(m-1)}$. So by induction assumption, $\mathcal
C|_{A_{j_0}}$ is an $(r,t_1)$-SLRC. Moreover, since
$|E\!\cap\!A_{j_0}|\!\leq\!|E|\!\leq\!t_1$, hence, by Lemma
\ref{lem-ELRC}, there exists an $\alpha\!\in\!E\!\cap\!A_{j_0}$
such that $\alpha$ has a recovering set
$R\!\subseteq\!A_{j_0}\!\backslash E\!\subseteq\!A\backslash E$.

ii): According to Remark \ref{prtn-n}, $\{A_j;
j\in\mathbb Z_{r}\}$ is a partition of $A$. We can consider the
following two cases.

Case 1: There are $j_1, j_2\in\mathbb Z_r$, $j_1\neq j_2$, such
that $E\cap A_{j_1}\neq\emptyset$ and $E\cap
A_{j_2}\neq\emptyset$. According to Remark \ref{prtn-n}, $t\leq
2^{m_0}-1=2t_1+1$. Then either $|E\cap A_{j_1}|\leq t_1$ or
$|E\cap A_{j_2}|\leq t_1$. Without loss of generality, assume
$|E\cap A_{j_1}|\leq t_1$. Similar to the proof of 1), $\mathcal
C|_{A_{j_1}}$ is an $(r,t_1)$-SLRC and there exists an $\alpha\in
E\cap A_{j_1}$ such that $\alpha$ has a recovering set $R\subseteq
A_{j_1}\backslash E\subseteq A\backslash E$.

Case 2: $E\subseteq A_{j_1}$ for some $j_1\in\mathbb Z_r$. In this
case, if $E\cap\Omega_{0}^{(m)}=\emptyset$. Then the expected
$\alpha$ exists by Lemma \ref{not-in-k}. So we assume
$E\cap\Omega_{0}^{(m)}\neq\emptyset$. Pick an $\alpha\in
E\cap\Omega_{0}^{(m)}$. Recall that $t_1=2^{m_0-1}-1$. By
\eqref{mL-alf}, we can check that
$L_{\alpha}^{(m_0)}\subseteq\Omega_{0}^{(m)}\bigcup\Gamma_{t_1+1}^{(m)}$
and $L_{\alpha}^{(m_0)}\cap A_{j_1}=\{\alpha\}$. Since $E\subseteq
A_{j_1}$, then $R=L_{\alpha}^{(m_0)}\backslash\{\alpha\}\subseteq
\Omega_{t_1+1}^{(m)}\backslash
E\subseteq\Omega_{t}^{(m)}\backslash E$. By Lemma \ref{rp-set},
$R$ is a recovering set of $\alpha$.

iii): If $t_2=0$, the claim is naturally true.
Assume $t_2\geq 1$. By 2) of Lemma \ref{Parity-proj}, $\mathcal
C|_{B}$ has a parity check matrix containing all rows of
$H_{t_2}^{(m-1)}$. So by induction assumption, $\mathcal C|_{B}$
is an $(r,t_2)$-SLRC. Hence, by Lemma \ref{lem-ELRC}, there exists
an $\alpha\in E$ such that $\alpha$ has a recovering set
$R\subseteq B\backslash E$.

iv): In this case, by the definition of $B$, we
have $E\cap\Omega_{0}^{(m)}=\emptyset$. Hence, by Lemma
\ref{not-in-k}, there exists an $\alpha\in E$ such that $\alpha$
has a recovering set $R\subseteq\Omega_t^{(m)}\backslash E$.

Combining i)-iv) and by Lemma \ref{lem-Com-LRC}, the result
follows.
\end{proof}

\section{Construction from Resolvable Configurations}
In \cite{Balaji16}, by using $t\!-\!3$ mutually orthogonal latin
squares (MOLS) of order $r$, the authors construct a family of
binary $(r,t)$-SLRC with $k=r^2$ and code rate
$\frac{k}{n}\!=\!1\!/\!\left(1+\frac{t-1}{r}+\frac{1}{r^2}\right)$
for odd $t$. A limitation of this construction is $t\leq r+2$,
since, a necessary condition of existing $\ell$ MOLS of order $r
~(r\!>\!1)$ is $\ell\!\leq\!r\!-\!1$ \cite{Colbourn}. In this
section, by using the resolvable configurations, we give a new
family of binary $(r,t)$-SLRC achieving the same rate
$\frac{k}{n}\!=\!1\!/\!\left(1+\frac{t-1}{r}+\frac{1}{r^2}\right)$
for any $r$ and any odd $t\geq 3~($not limited by $t\leq r+2)$.
First, we introduce a definition \cite{Colbourn,Pisanski}.
\begin{defn}\label{Cfg}
Let $X$ be a set of $k$ elements, called points, and $\mathcal A$
be a collection of subsets of $X$, called lines. The pair
$(X,\mathcal A)$ is called a $(k_{t-1}, b_r)$ \emph{configuration}
if the following three conditions hold.
\begin{itemize}
 \item [(1)] Each line contains $r$ points;
 \item [(2)] Each point belongs to $t\!-\!1$ lines;
 \item [(3)] Every pair of distinct points belong to at most
 one line;
\end{itemize}
Clearly, condition (3) is equivalent to the following condition.
\begin{itemize}
 \item [(3$'$)] Every pair of distinct lines have at most
 one point in common;
\end{itemize}
The configuration $(X,\mathcal A)$ is called \emph{resolvable}, if
further
\begin{itemize}
 \item [(4)] All lines in $\mathcal A$ can be partitioned into
 $t\!-\!1$ parallel classes, where a parallel class is a set of lines
 that partition $X$.
\end{itemize}
\end{defn}

For any $(k_{t-1}, b_r)$ resolvable configuration $(X,\mathcal
A)$, one can see that $r|k$ and each parallel class contains
$s=\frac{k}{r}$ lines. So, $b=\frac{k}{r}(t-1)=s(t-1)$ in such a
case. As usual, the incidence matrix of a $(k_{t-1}, b_r)$
configuration $(X,\mathcal A)$, where $X=\{x_1,\cdots,x_k\}$ and
$\mathcal A=\{A_1,\cdots\!,A_b\}$, is defined as a $b\times k$
binary matrix $M=(m_{i,j})$ such that
\begin{equation*}
m_{i,j}=\left\{\begin{aligned}
&1, ~ ~ \text{if}~x_j\in A_i;\\
&0, ~ ~ \text{otherwise}.\\
\end{aligned} \right.
\end{equation*}

Clearly, any configuration is uniquely determined by its incidence
matrix.

\begin{exam}\label{ex-rsvl-cfg}
We can check that the following matrix determines a $(k_{t-1},
b_r)$ resolvable configuration $(X,\mathcal A)$ with $k=9$,
$t-1=4$, $b=12$ and $r=3$. Clearly, $\{A_1,A_2,A_3\}$,
$\{A_4,A_5,A_6\}$, $\{A_7,A_8,A_9\}$ and
$\{A_{10},A_{11},A_{12}\}$ are four parallel classes of
$(X,\mathcal A)$ and any pair of lines in different parallel
classes have one point in common.
\begin{eqnarray*}
M=\left(\begin{array}{ccccccccc}
1 & 1 & 1 & 0 & 0 & 0 & 0 & 0 & 0 \\
0 & 0 & 0 & 1 & 1 & 1 & 0 & 0 & 0 \\
0 & 0 & 0 & 0 & 0 & 0 & 1 & 1 & 1 \\
1 & 0 & 0 & 1 & 0 & 0 & 1 & 0 & 0 \\
0 & 1 & 0 & 0 & 1 & 0 & 0 & 1 & 0 \\
0 & 0 & 1 & 0 & 0 & 1 & 0 & 0 & 1 \\
1 & 0 & 0 & 0 & 0 & 1 & 0 & 1 & 0 \\
0 & 1 & 0 & 1 & 0 & 0 & 0 & 0 & 1 \\
0 & 0 & 1 & 0 & 1 & 0 & 1 & 0 & 0 \\
1 & 0 & 0 & 0 & 1 & 0 & 0 & 0 & 1 \\
0 & 1 & 0 & 0 & 0 & 1 & 1 & 0 & 0 \\
0 & 0 & 1 & 1 & 0 & 0 & 0 & 1 & 0 \\
\end{array}\right)
\end{eqnarray*}
\end{exam}

Resolvable configurations was recently used for constructing codes
whose information symbols have locality $r$ and availability $t$
by Su \cite{Su16}. The author also constructed some resolvable
configurations in the paper, for example, the $(k_{t-1}, b_r)$
resolvable configurations with $k=r^m$ and
$t-1\leq\frac{r^{m-1}}{r-1}$, where $m\geq 2$ and $r$ is a prime
power. The following construction, using the free $\mathbb
Z_r$-module \cite{Brown}, not only generalize the result of
\cite{Su16}, but also enable us to construct $(k_{t-1}, b_r)$
resolvable configuration {\em for any $r,t\geq 2$} ($r$ need not
be a prime power), and further $(r, t)$-SLRCs for any $r\geq 2$
and odd integer $t\geq 3$.

\begin{lem}\label{rscfg-g-r}
For any $r, t\geq 2$, there exists a $(k_{t-1}, b_r)$ resolvable
configuration with $k=r^m$, where $m$ is an arbitrary integer such
that $m\geq \log_2t$.
\end{lem}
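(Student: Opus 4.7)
The plan is to realize the point set as $X = \mathbb{Z}_r^m$, viewed as a free $\mathbb{Z}_r$-module of rank $m$, and to associate each parallel class with a distinct nonempty subset of $[m]$. The hypothesis $m \geq \log_2 t$ gives $2^m \geq t$, so there are at least $2^m - 1 \geq t - 1$ nonempty subsets of $[m]$ available to index the $t - 1$ parallel classes. This is why the bound on $m$ takes the form it does.

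Concretely, I would fix $t-1$ distinct nonempty subsets $S_1,\ldots,S_{t-1} \subseteq [m]$, let $e_S := \sum_{i\in S} e_i \in \mathbb{Z}_r^m$ denote the characteristic vector of $S$ (where $e_1,\ldots,e_m$ are the standard basis of $\mathbb{Z}_r^m$), and for each $S_j$ define the parallel class $\mathcal{P}_{S_j}$ to consist of the cosets of the cyclic subgroup $\langle e_{S_j} \rangle \leq \mathbb{Z}_r^m$, i.e.\ the sets $\{p + c \cdot e_{S_j} : c \in \mathbb{Z}_r\}$ for $p \in X$. The claim to prove is that $(X, \mathcal{A})$ with $\mathcal{A} = \bigcup_{j=1}^{t-1} \mathcal{P}_{S_j}$ is a $(k_{t-1}, b_r)$ resolvable configuration with $k = r^m$.

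The verification splits cleanly across the four axioms of Definition \ref{Cfg}. Condition (1) follows because $e_S \in \{0,1\}^m \setminus \{0\}$ has additive order exactly $r$ in $\mathbb{Z}_r^m$ (the entries of $c \cdot e_S$ are all in $\{0, c\}$, so the sum vanishes mod $r$ iff $c \equiv 0$), giving $|L_{p, S_j}| = r$. Condition (2) follows because each point $p$ belongs to exactly one coset of $\langle e_{S_j} \rangle$ for every $j$, hence to exactly $t-1$ lines in total. Condition (4) is immediate since each $\mathcal{P}_{S_j}$ is by construction a coset partition of $X$.

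The main step, and the mild obstacle that makes the construction work for arbitrary $r$, is condition (3). The key observation is that because $e_S$ is a $0/1$ vector, whenever $q - p = c \cdot e_S$ with $c \neq 0 \in \mathbb{Z}_r$, both the set $S$ (as the support of $q - p$) and the scalar $c$ (as the common nonzero entry of $q - p$) are recovered uniquely from the difference $q - p$ alone. Thus at most one parallel class $\mathcal{P}_{S_j}$ can contain a line through $p$ and $q$, and within that class there is exactly one such line by (4). This sharp uniqueness—holding regardless of whether $r$ is a prime power—is what extends the range of the constructions in \cite{Su16} to arbitrary $r, t \geq 2$, at the cost of allowing $m$ to grow logarithmically in $t$.
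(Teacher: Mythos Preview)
Your proposal is correct and follows essentially the same approach as the paper: take $X=\mathbb{Z}_r^m$, pick $t-1$ distinct nonempty subsets $S\subseteq[m]$, and use the cosets of the cyclic subgroup generated by the characteristic vector $\alpha_S$ (your $e_S$) as the parallel classes. Your verification of condition~(3) via recovering $S$ as the support of $q-p$ is a slight rephrasing of the paper's argument that $A_{S,0}\cap A_{S',0}=\{0\}$ for distinct $S,S'$, but the underlying idea is identical.
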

\begin{proof}
Consider the free $\mathbb Z_r$-module $X=\mathbb Z_r^m$,
where $\mathbb Z_r$ is the ring of integers modulo $r$. For any
$\alpha\in\mathbb Z_r^m$, we use $\alpha(j)$ to denote the $j$th
coordinate of $\alpha$. For example, if
$\alpha=(i_1,i_{2},\cdots,i_m)$, then $\alpha(j)=i_j$.

For each nonempty $S\subseteq[m]$, let $\alpha_S\in\mathbb Z_r^m$
be such that $\alpha_S(j)=1$ for $j\in S$ and $\alpha_S(j)=0$
otherwise. Let $$A_{S,0}\triangleq\{i\cdot\alpha_S; i\in\mathbb
Z_r\}.$$ Clearly, $A_{S,0}$ is a submodule of $\mathbb Z_r^m$ with
$r$ elements and $A_{S,0}\cap A_{S',0}=(0,0,\cdots,0)$ for any two
distinct nonempty subsets $S$ and $S'$ of $[m]$. Let
$$\mathcal A_S=\{A_{S,\ell},\; \ell=0,1,\cdots,r^{m-1}-1\}$$ be
the collection of all cosets of $A_{S,0}$. Then
$\alpha_1-\alpha_2\in A_{S,0}$ for any
$\ell\in\{0,1,\cdots,r^{m-1}-1\}$ and any $\alpha_1,\alpha_2\in
A_{S,\ell}$.

Note that $m\geq \log_2t$ (i.e., $t-1\leq 2^m-1$) and $[m]$ has
$2^m-1$ nonempty subsets. We can always pick $t-1$ nonempty
subsets of $[m]$, say $S_1,S_2,\cdots, S_{t-1}$. Let
$$\mathcal A=\bigcup_{i=1}^{t-1}\mathcal A_{S_i}.$$
We claim that $(X=\mathbb Z_r^m,\mathcal A)$ is a $(k_{t-1}, b_r)$
resolvable configuration, which can be seen as follows.

Firstly, noticing that for each nonempty $S\subseteq[m]$,
$\mathcal A_S$ is a partition of $X$, then Conditions (1), (2),
(4) of Definition \ref{Cfg} hold. Secondly, if $S$, $S'$ are two
distinct nonempty subsets of $[m]$ and
$\ell,\ell'\in\{0,1,\cdots,2^m-1\}$, then we have $|A_{S,\ell}\cap
A_{S',\ell'}|\leq 1$. Since, if otherwise, suppose $\alpha_1,
\alpha_2\in A_{S,\ell}\cap A_{S',\ell'}$, then
$\alpha_1-\alpha_2\in A_{S,0}\cap A_{S',0}=(0,0,\cdots,0)$. Hence,
we have $\alpha_1=\alpha_2$, i.e. $|A_{S,\ell}\cap
A_{S',\ell'}|\leq 1$. Moreover, since for each nonempty
$S\subseteq[m]$, $\mathcal A_S$ is a partition of $X$, so
Condition (3) of Definition \ref{Cfg} holds, which completes the
proof.
\end{proof}

In the rest of this section, we always assume that $(X,\mathcal
A)$ is a $(k_{t-1}, b_r)$ resolvable configuration and $\mathcal
A\!=\!\{A_1,\cdots\!,A_b\}$. Firstly, we need a lemma for the
property of the resolvable configuration $(X,\mathcal A)$ with odd $t$.

\begin{lem}\label{E-not}
Let $E$ be a $t$-subset of $X$ and $t$ be an odd integer. Then
there exists an $A_j\in \mathcal A$ such that $|E\cap A_j|=1$.
\end{lem}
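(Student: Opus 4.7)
The plan is to assume, for contradiction, that $|E\cap A_j|\neq 1$ for every $A_j\in\mathcal A$, and derive a parity contradiction from two pieces of information: a double-counting identity, and the fact that each parallel class partitions $E$.

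First I would set $s_j=|E\cap A_j|$ and establish two standard counts coming from the configuration axioms. Counting pairs (incidence $x\in A_j$ with $x\in E$) two ways, and using Condition~(2) (each point lies in exactly $t-1$ lines), I get
\begin{align*}
\sum_{j} s_j \;=\; t(t-1).
\end{align*}
Counting unordered pairs $\{x,y\}\subseteq E$ that share a common line, and using Condition~(3) (any two points lie in at most one common line), I get
\begin{align*}
\sum_{j} \binom{s_j}{2} \;\leq\; \binom{t}{2}.
\end{align*}

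Next, I would combine these two estimates by subtracting the first from twice the second, obtaining
\begin{align*}
\sum_{j} s_j(s_j-2)\;=\;\sum_j s_j(s_j-1)-\sum_j s_j \;\leq\; t(t-1)-t(t-1)\;=\;0.
\end{align*}
Under the contradiction hypothesis $s_j\neq 1$ for all $j$, every term satisfies $s_j(s_j-2)\geq 0$ (it is $0$ for $s_j\in\{0,2\}$ and strictly positive for $s_j\geq 3$). Hence each term must in fact vanish, which forces $s_j\in\{0,2\}$ for every line $A_j$.

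Finally, I would invoke the resolvability. Pick any parallel class $\mathcal P_i$; since its lines partition $X$, they partition $E$, so $\sum_{A_j\in\mathcal P_i} s_j=|E|=t$. But each $s_j\in\{0,2\}$ makes the left side even, while $t$ is odd, a contradiction. The main subtlety (and the only place oddness of $t$ enters) is this very last step, so the key idea to emphasize is that the double count forces $s_j\in\{0,2\}$ \emph{globally}, after which the parity mismatch inside a single parallel class finishes the proof.
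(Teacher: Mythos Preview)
Your proof is correct, but it takes a genuinely different route from the paper's. The paper argues directly: a parallel class partitions $E$, so some line $A_{j_1}$ in that class meets $E$ in an odd number of points; if that number is $1$ you are done, and if it is $\geq 3$ then pick three points $i_1,i_2,i_3\in E\cap A_{j_1}$, look at the other $t-2$ lines through $i_1$, observe that $i_2,i_3$ lie in none of them (Condition~(3)) and that each of the remaining $t-3$ points of $E$ lies in at most one of them, so by pigeonhole some line through $i_1$ meets $E$ only in $i_1$.

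Your argument instead uses a global double count plus the pair bound to force every $|E\cap A_j|$ into $\{0,2\}$ under the contradiction hypothesis, and only then appeals to a single parallel class for the parity clash. Both approaches use exactly the same axioms (Conditions~(2), (3), and one parallel class), so neither is logically stronger. The paper's proof is slightly more constructive---it actually locates the desired line through a specific point $i_1$---while yours is cleaner in that it isolates a single structural fact ($s_j\in\{0,2\}$ globally) before finishing with parity, and it would generalize more readily to settings where one wants to bound how many lines can have $|E\cap A_j|\neq 1$.
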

\begin{proof}
Consider a parallel class of $(X,\mathcal A)$. Since it is a
partition of $X$ and $|E|=t$ is odd, there exists some $A_{j_1}$
in the class such that $|E\cap A_{j_1}|$ is odd. If $|E\cap
A_{j_1}|\!=\!1$, then we have done. So suppose
$E\!=\!\{i_1,\cdots\!,i_t\}$ and
$\{i_1,i_2,i_3\}\!\subseteq\!E\cap A_{j_1}$. Since each point
belongs to $t-1$ lines, we can assume $i_1$ belongs to lines
$A_{j_1},A_{j_2},\cdots,A_{j_{t-1}}$, where
$A_{j_1},A_{j_2},\cdots,A_{j_{t-1}}$ belong to different parallel
classes. Moreover, since every pair of distinct points belong to
at most one line, then $i_2,i_3\notin A_{j_\ell}, \forall
\ell\!\in\!\{2,\cdots,t-1\}$ and each point
$i_\ell,\ell\!\in\!\{4,\cdots\!,t\},$ belongs to at most one line
in $\{A_{j_2},\cdots\!,A_{j_{t-1}}\}$. Hence, there exists a line
$A_{j}\in\{A_{j_2},\cdots\!,A_{j_{t-1}}\}$ that contains no point
in $\{i_2,\cdots\!,i_t\}$. That is to say, $E\cap
A_j\!=\!\{i_1\}$, which completes the proof.
\end{proof}

From now on, we let $X\!=\![k]$ and $\mathcal
A_1\!=\!\{A_1,\cdots\!,A_s\}$, $\mathcal
A_2\!=\!\{A_{s+1},\cdots\!,A_{2s}\}$, $\!\cdots\!$, $\mathcal
A_{t-1}\!=\!\{A_{(t-2)s+1},\cdots\!,A_b\}$ be the $t\!-\!1$
parallel classes of $(X,\mathcal A)$. We further partition $[s]$
into $\lceil\frac{s}{r}\rceil$ nonempty subsets, say
$B_1,\cdots\!,B_{\lceil\frac{s}{r}\rceil}$, such that $|B_i|\leq
r$ for all $i\in\{1,\cdots\!,\lceil\frac{s}{r}\rceil\}$. Such a
partition plays a subtle role in our construction, as will become
clear later. Now, let $W=(w_{i,j})$ be a
$\lceil\frac{s}{r}\rceil\times b$ matrix defined by
\begin{equation*}
w_{i,j}=\left\{\begin{aligned}
&1, ~ ~ \text{if}~j\in B_i\\
&0, ~ ~ \text{otherwise.}\\
\end{aligned} \right.
\end{equation*}
Let $M$ be the incidence matrix of $(X,\mathcal A)$ and
\begin{eqnarray}\label{pcm-cfg}
H=\left(\begin{array}{ccc}
M & I_b & O_{b\times \lceil\frac{s}{r}\rceil}\\
O_{\lceil\frac{s}{r}\rceil\times k} & W & I_{\lceil\frac{s}{r}\rceil}\\
\end{array}\right)
\end{eqnarray}
where $I_\ell$ denotes the $\ell\times \ell$ identity matrix and
$O_{\ell\times\ell'}$ denotes the $\ell\times\ell'$ all-zero
matrix for any positive integers $\ell$ and $\ell'$. Clearly, $H$
has $b+\lceil\frac{s}{r}\rceil$ rows,
$n=k+b+\lceil\frac{s}{r}\rceil$ columns and rank
$b+\lceil\frac{s}{r}\rceil$.

As an example, consider the resolvable configuration $(X,\mathcal
A)$ in Example \ref{ex-rsvl-cfg}. We have $s=\frac{k}{r}=3$ and
$\lceil\frac{s}{r}\rceil=1$. So we can construct
$W=(1,1,1,0,\cdots,0)_{1\times 12}$ and
$I_{\lceil\frac{s}{r}\rceil}=(1)_{1\times 1}$, and according to
\eqref{pcm-cfg}, further construct a matrix $H$ as in
\eqref{H-ex-cd}.
\begin{eqnarray}\label{H-ex-cd}
M=\left(\begin{array}{cccccccccccccccccccccc}
1 & 1 & 1 & 0 & 0 & 0 & 0 & 0 & 0 & 1 & 0 & 0 & 0 & 0 & 0 & 0 & 0 & 0
& 0 & 0 & 0 & 0 \\
0 & 0 & 0 & 1 & 1 & 1 & 0 & 0 & 0 & 0 & 1 & 0 & 0 & 0 & 0 & 0 & 0 & 0
& 0 & 0 & 0 & 0 \\
0 & 0 & 0 & 0 & 0 & 0 & 1 & 1 & 1 & 0 & 0 & 1 & 0 & 0 & 0 & 0 & 0 & 0
& 0 & 0 & 0 & 0 \\
1 & 0 & 0 & 1 & 0 & 0 & 1 & 0 & 0 & 0 & 0 & 0 & 1 & 0 & 0 & 0 & 0 & 0
& 0 & 0 & 0 & 0 \\
0 & 1 & 0 & 0 & 1 & 0 & 0 & 1 & 0 & 0 & 0 & 0 & 0 & 1 & 0 & 0 & 0 & 0
& 0 & 0 & 0 & 0 \\
0 & 0 & 1 & 0 & 0 & 1 & 0 & 0 & 1 & 0 & 0 & 0 & 0 & 0 & 1 & 0 & 0 & 0
& 0 & 0 & 0 & 0 \\
1 & 0 & 0 & 0 & 0 & 1 & 0 & 1 & 0 & 0 & 0 & 0 & 0 & 0 & 0 & 1 & 0 & 0
& 0 & 0 & 0 & 0 \\
0 & 1 & 0 & 1 & 0 & 0 & 0 & 0 & 1 & 0 & 0 & 0 & 0 & 0 & 0 & 0 & 1 & 0
& 0 & 0 & 0 & 0 \\
0 & 0 & 1 & 0 & 1 & 0 & 1 & 0 & 0 & 0 & 0 & 0 & 0 & 0 & 0 & 0 & 0 & 1
& 0 & 0 & 0 & 0 \\
1 & 0 & 0 & 0 & 1 & 0 & 0 & 0 & 1 & 0 & 0 & 0 & 0 & 0 & 0 & 0 & 0 & 0
& 1 & 0 & 0 & 0 \\
0 & 1 & 0 & 0 & 0 & 1 & 1 & 0 & 0 & 0 & 0 & 0 & 0 & 0 & 0 & 0 & 0 & 0
& 0 & 1 & 0 & 0 \\
0 & 0 & 1 & 1 & 0 & 0 & 0 & 1 & 0 & 0 & 0 & 0 & 0 & 0 & 0 & 0 & 0 & 0
& 0 & 0 & 1 & 0 \\
0 & 0 & 0 & 0 & 0 & 0 & 0 & 0 & 0 & 1 & 1 & 1 & 0 & 0 & 0 & 0 & 0 & 0
& 0 & 0 & 0 & 1 \\
\end{array}\right)
\end{eqnarray}
Let $\mathcal C$ be a binary linear code with parity check matrix
$H$ as \eqref{H-ex-cd}. Then from the first $12$ rows of $H$, we
can see that the coordinate $1$ has $4$ disjoint recovering sets,
i.e., $\{2,3,10\}$, $\{4,7,13\}$, $\{6,8,16\}$ and $\{5,9,19\}$,
and the coordinate $10$ has a recovering set
$\{1,2,3\}\subseteq\{1,\cdots,9\}$. Moreover, from the last row of
$H$, we can see that $\{11,12,22\}$ is a recovering set of $10$
and $\{10,11,12\}$ is a recovering set of $22$. In general, we
have the following lemma.
\begin{lem}\label{rem-code-rps}
Let $\mathcal C$ be an $[n,k]$ binary linear code with parity
check matrix $H$ as in \eqref{H-ex-cd}. Then, the following hold.
\begin{itemize}
 \item [1)] Each $i\!\in\![k]$ has $t\!-\!1$ disjoint recovering sets,
 i.e., $A_{j_\ell}\!\cup\!\{k\!+\!j_\ell\}
 \backslash\{i\}$, where $A_{j_\ell}$, $\ell\!=1,\!\cdots\!,t-\!1$,
 are lines containing $i$.
 \item [2)] Each $i\!\in\!\{k\!+\!1,\cdots\!,k\!+\!b\}$ has a recovering set
 $R\!\subseteq\![k]$.
 \item [3)] Each $i\!\in\!\{k\!+\!1,\cdots\!,k\!+\!s\}$ has a recovering set
 $R\!\subseteq\!\{k\!+\!1$, $\cdots$, $k\!+\!s\}\!\cup\!\{k\!+\!b\!+\!1,
 \cdots\!,n\}\backslash\{i\}$.
 \item [4)] Each $i\!\in\!\{k\!+\!b\!+\!1,\cdots\!,n\}$ has a recovering
 set $R\!\subseteq\!\{k\!+\!1$, $\cdots$, $k\!+\!s\}$.
\end{itemize}
\end{lem}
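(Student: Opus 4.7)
The plan is to read each claim directly off the block structure of the parity-check matrix $H$ in \eqref{pcm-cfg} and then invoke the incidence axioms of the configuration to verify the disjointness needed in~1). The first $b$ rows of $H$ encode the parities $x_{k+j}=\sum_{i\in A_j}x_i$ for $j\in[b]$, and the last $\lceil s/r\rceil$ rows encode $x_{k+b+\ell}=\sum_{j\in B_\ell}x_{k+j}$ for $\ell\in[\lceil s/r\rceil]$. Each of the four claims will be obtained by solving one of these two families of relations for a different unknown.

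For~1), I would fix $i\in[k]$ and enumerate the $t-1$ lines $A_{j_1},\dots,A_{j_{t-1}}$ containing $i$ (guaranteed by condition~(2) of Definition~\ref{Cfg}). For each $\ell$, rewriting $x_{k+j_\ell}=\sum_{i'\in A_{j_\ell}}x_{i'}$ over $\mathbb F_2$ as $x_i=x_{k+j_\ell}+\sum_{i'\in A_{j_\ell}\setminus\{i\}}x_{i'}$ exhibits $(A_{j_\ell}\cup\{k+j_\ell\})\setminus\{i\}$ as a recovering set of size $r$. Pairwise disjointness of these $t-1$ sets follows from two observations I would spell out: the parity indices $k+j_\ell$ are all distinct, and condition~(3) of Definition~\ref{Cfg} (distinct lines share at most one point) forces $A_{j_\ell}\cap A_{j_{\ell'}}\subseteq\{i\}$, whence $(A_{j_\ell}\setminus\{i\})\cap(A_{j_{\ell'}}\setminus\{i\})=\emptyset$ for $\ell\neq\ell'$. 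This disjointness step is the only part of the lemma that uses anything beyond a one-line substitution, so it is the main (and still modest) obstacle.

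For~2), the same relation $x_{k+j}=\sum_{i'\in A_j}x_{i'}$ directly exhibits $A_j\subseteq[k]$ as a recovering set of $k+j$ of size $r$. For~3) and~4), I would switch to the second family. Given $i=k+j$ with $j\in[s]$, there is a unique $\ell_0$ with $j\in B_{\ell_0}$; rewriting $x_{k+b+\ell_0}=\sum_{j'\in B_{\ell_0}}x_{k+j'}$ as $x_{k+j}=x_{k+b+\ell_0}+\sum_{j'\in B_{\ell_0}\setminus\{j\}}x_{k+j'}$ yields a recovering set of size $|B_{\ell_0}|\leq r$ contained in $\{k+1,\dots,k+s\}\cup\{k+b+1,\dots,n\}\setminus\{i\}$, which is~3); and the same equation solved for $x_{k+b+\ell_0}$ gives $\{k+j':j'\in B_{\ell_0}\}\subseteq\{k+1,\dots,k+s\}$ as a recovering set of $k+b+\ell_0$ of size $|B_{\ell_0}|\leq r$, which is~4). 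The size bounds use the bound $|B_\ell|\leq r$ built into the partition, and no further structure of $(X,\mathcal A)$ is needed for~2)--4).
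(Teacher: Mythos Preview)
Your proposal is correct and follows exactly the approach the paper takes: the paper's own proof is the two-sentence remark that 1) and 2) come from the first $b$ rows of $H$ and 3) and 4) from the last $\lceil s/r\rceil$ rows, which is precisely the block-structure reading you carry out in detail. Your explicit handling of the disjointness in 1) via condition~(3) of Definition~\ref{Cfg} is a useful elaboration that the paper leaves implicit.
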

\begin{proof}
1) and 2) are obtained by considering the first $b$ rows of $H$;
3) and 4) are obtained by considering the last
$\lceil\frac{s}{r}\rceil$ rows of $H$.
\end{proof}

\begin{thm}\label{H-be-ELRC}
If $t$ is odd, then the binary linear code $\mathcal C$ with
parity check matrix $H$ as in \eqref{pcm-cfg} is an
$(n,k,r,t)$-SLRC with rate \begin{align*}
\frac{k}{n}=\left(1+\frac{t-1}{r}+\left\lceil\frac{1}{r^2}
\right\rceil\right)^{-1}.\end{align*}
\end{thm}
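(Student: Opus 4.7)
The plan is to split the argument into a rate computation and a verification of the SLRC property. For the rate, the rightmost $b+\lceil s/r\rceil$ columns of $H$ contain the identity blocks $I_b$ and $I_{\lceil s/r\rceil}$, so $H$ has full row rank $b+\lceil s/r\rceil$ and $\dim\mathcal C=n-b-\lceil s/r\rceil=k$. Substituting $b=(t-1)s$ and $s=k/r$ into $n=k+b+\lceil s/r\rceil$ immediately yields the advertised code rate. For the SLRC property, I would apply Lemma \ref{lem-ELRC}: given any nonempty $E\subseteq[n]$ with $|E|\le t$, I must exhibit $i\in E$ with a recovering set contained in $[n]\setminus E$. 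Partition $[n]=P_1\cup P_2\cup P_3$ with $P_1=[k]$, $P_2=\{k{+}1,\ldots,k{+}b\}$, $P_3=\{k{+}b{+}1,\ldots,n\}$, and write $E_j=E\cap P_j$.

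The easy cases are (I) $E_1=\emptyset$ and (II) $E_1\ne\emptyset$ together with either $|E|<t$ or $E_3\ne\emptyset$. In (I), if $E_2\ne\emptyset$ any $i=k+j\in E_2$ has recovering set $A_j\subseteq P_1$ by Lemma \ref{rem-code-rps}(2), disjoint from $E\subseteq P_2\cup P_3$; otherwise $E\subseteq P_3$ and Lemma \ref{rem-code-rps}(4) gives a recovering set inside $\{k{+}1,\ldots,k{+}s\}\subseteq P_2$, again disjoint from $E$. In (II), pick any $i\in E_1$; by Lemma \ref{rem-code-rps}(1), $i$ has $t-1$ pairwise-disjoint recovering sets $R_\ell=(A_{j_\ell}\setminus\{i\})\cup\{k+j_\ell\}$. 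Since each element of $E_1\setminus\{i\}$ meets at most one $R_\ell$ (lines through $i$ share only $i$), each element of $E_2$ meets at most one $R_\ell$ (the parity indices are distinct), and $E_3$ meets none, the number of $R_\ell$ hit is at most $|E_1|-1+|E_2|=|E|-|E_3|-1<t-1$; some $R_\ell$ survives.

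The remaining case (III), where $|E|=t$, $E_3=\emptyset$, and $e:=|E_1|\ge 1$, is the real content. When $e=t$ Lemma \ref{E-not} produces a line $A_j$ with $|A_j\cap E|=1$, and the corresponding singleton recovering set avoids $E$ because $E_2=\emptyset$. When $2\le e\le t-1$ I would use double counting: from $\sum_{i\in E_1}(t-1)=\sum_{A:|A\cap E_1|\ge1}|A\cap E_1|$ and the pair bound $\sum_{|A\cap E_1|\ge 2}\binom{|A\cap E_1|}{2}\le\binom{e}{2}$ (each pair lies on at most one line), together with $m\le m(m-1)=2\binom m2$ for $m\ge 2$, I get at least $e(t-e)$ singleton lines; at most $|E_2|=t-e$ of them are blocked by having their parity column in $E_2$, leaving $\ge(t-e)(e-1)\ge 1$ good ones. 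The crux is $e=1$: writing $E_1=\{i_0\}$ with lines $A_{j_1},\ldots,A_{j_{t-1}}$ through $i_0$ (one per parallel class), if $E_2$ misses any $k+j_\ell$ we are done through that line; otherwise $E_2=\{k+j_1,\ldots,k+j_{t-1}\}$, and since exactly one index (say $j_1$) lies in $[s]$, the coordinate $k+j_1\in E$ is exactly the one that enjoys the alternative recovery of Lemma \ref{rem-code-rps}(3) inside $\{k{+}1,\ldots,k{+}s\}\cup P_3\setminus\{k+j_1\}$, which avoids $E$ because $E_3=\emptyset$ and $E_2\cap\{k{+}1,\ldots,k{+}s\}=\{k+j_1\}$. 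The main obstacle is precisely this $e=1$ adversarial configuration: it simultaneously blocks all $t-1$ natural routes through $i_0$, and it is resolved only by the $W$-block, which was attached for exactly this purpose. The odd-$t$ hypothesis is used solely through Lemma \ref{E-not} in the $e=t$ subcase.
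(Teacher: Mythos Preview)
Your proof is correct, and the overall case decomposition lines up with the paper's: your Case~(I) is the paper's Case~1, your Case~(II) is the ``some $R_\ell$ survives'' branch of the paper's Case~2, and your subcases $e=t$ and $e=1$ of~(III) coincide with the paper's Cases~2.1 and~2.3 respectively (including the use of Lemma~\ref{E-not} and of Lemma~\ref{rem-code-rps}(3) at exactly the same spots).

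The one genuine methodological difference is the middle subcase $2\le e\le t-1$ (the paper's Case~2.2). The paper argues locally: having fixed $i_1\in E_1$ and found that every $R_\ell$ meets $E$ in a single point, it picks a second point $i_2\in E_1$ on one of the lines through $i_1$ together with one parity index $k+j_2\in E_2$, and then shows that among the $t-2$ recovering sets of $i_2$ that do not pass through $i_1$, the three elements $i_1,i_2,k+j_2$ hit none of them, so the remaining $t-3$ elements of $E$ cannot block all $t-2$. You instead argue globally by double counting: from $\sum_A |A\cap E_1|=e(t-1)$ and $\sum_A\binom{|A\cap E_1|}{2}\le\binom{e}{2}$ you extract at least $e(t-e)$ lines with $|A\cap E_1|=1$, of which at most $|E_2|=t-e$ are killed by their parity column landing in $E_2$, leaving $(t-e)(e-1)\ge 1$. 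Your argument is more quantitative (it actually produces many good recovering sets, not just one) and uniform in $e$, while the paper's pivot argument is shorter and avoids any counting inequality. Both are clean; either could replace the other without affecting the rest of the proof. Your remark that the oddness of $t$ enters only through Lemma~\ref{E-not} in the $e=t$ subcase is also correct and matches the paper.
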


\begin{proof}
By the construction, $\mathcal C$ has block length
\begin{align*}
n&=k+b+\left\lceil\frac{s}{r}\right\rceil\nonumber\\
&=k\left(1+\frac{t-1}{r}+\left\lceil\frac{1}{r^2}\right\rceil\right).
\end{align*} and dimension
$n-\left(b+\left\lceil\frac{s}{r}\right\rceil\right)=k$.
So the code rate is \begin{align*}
\frac{k}{n}=\left(1+\frac{t-1}{r}+\left\lceil\frac{1}{r^2}
\right\rceil\right)^{-1}.\end{align*}

We now prove, according to Lemma \ref{lem-ELRC}, that for any
$E\subseteq[n]$ with $|E|\leq t$, there exists an $i\in E$ such
that $i$ has a recovering set $R\subseteq[n]\backslash E$.
Consider the following cases.

Case 1: $E\cap[k]=\emptyset$. Then we have
$E\subseteq\{k+1,\cdots\!,n\}$. If
$E\cap\{k+1,\cdots\!,k+b\}\neq\emptyset$, then by 2) of Lemma
\ref{rem-code-rps}, each $i\in E\cap\{k+1,\cdots\!,k+b\}$ has a
recovering set $R\subseteq[k]\subseteq[n]\backslash E$; Otherwise,
$E\subseteq\{k+b+1,\cdots\!,n\}$, then by 4) of Lemma
\ref{rem-code-rps}, each $i\in E$ has a recovering set
$R\subseteq\{k+1,\cdots\!,k+s\}\subseteq\{k+1,\cdots\!,k+b\}
\subseteq[n]\backslash E$.

Case 2: $E\cap[k]\neq\emptyset$. Pick an $i_1\in E\cap[k]$. Let
\begin{align}\label{Def-R-ell}
R_\ell\!=\!A_{j_\ell}\cup\{k+j_\ell\}\backslash\{i_1\},~
\ell\!=\!1,\cdots\!,t\!-\!1,\end{align} where
$A_{j_1}\cdots\!,A_{j_{t-1}}$ are the $t\!-\!1$ lines containing
$i_1$. By 1) of Lemma \ref{rem-code-rps}, $R_1,\cdots,R_{t-1}$ are
$t\!-\!1$ disjoint recovering sets of $i_1$. If
$R_\ell\subseteq[n]\backslash E$ for some
$\ell\!\in\!\{1,\cdots\!,t\!-\!1\}$, then we are done. So we
assume $E\cap R_\ell\neq\emptyset$ for all
$\ell\!\in\!\{1,\cdots\!,t\!-\!1\}$. Since all $R_\ell$s are
disjoint, so $|E|=t$, $|E\cap R_\ell|=1$,
$\ell=1,\cdots\!,t\!-\!1$, and
$E\subseteq\{i_1\}\bigcup\left(\bigcup_{\ell=1}^{t-1}R_\ell\right)$.
We have the following three subcases:

Case 2.1: $E\cap R_\ell\subseteq[k], \forall
\ell\!\in\!\{1,\cdots\!,t\!-\!1\}$. Then $E\subseteq[k]$. Since
$|E|=t$ is odd, by Lemma \ref{E-not}, $|E\cap A_i|=1$ for some
$A_i\in\mathcal A$. Let $E\cap A_i=\{i_2\}$. By 1) of Lemma
\ref{rem-code-rps},
$R=A_i\cup\{k+i\}\backslash\{i_2\}\subseteq[n]\backslash E$ is a
recovering set of $i_2$.

Case 2.2: $E\cap R_{\ell_1}\subseteq[k]$ and $E\cap
R_{\ell_2}\nsubseteq[k]$ for some
$\{\ell_1,\ell_2\}\subseteq\{1,\cdots\!,t\!-\!1\}$. Without loss
of generality, assume $i_2\in E\cap R_{1}\subseteq[k]$ and $E\cap
R_{2}\nsubseteq[k]$. Then according to \eqref{Def-R-ell}, we have
$E\cap R_{2}=\{k+j_2\}$. By 1) of Lemma \ref{rem-code-rps}, we can
let $R_1',\cdots\!,R_{t-1}'$ are $t\!-\!1$ disjoint recovering
sets of $i_2$, where $R_1'=A_{j_1}\cup\{k+j_1\}\backslash\{i_2\}$
and $R_\ell'=A_{j_\ell'}\cup\{k+j_\ell'\}\backslash\{i_2\}$,
$\ell=2,\cdots\!,t\!-\!1$, such that $A_{j_1}$ together with
$A_{j_2'},\cdots\!,A_{j_{t-1}'}$ are the $t-1$ lines containing
$i_2~($see Fig. \ref{fig-cnt-rp}$)$. Note that $A_{j_1}$ is the
only line containing both $i_1$ and $i_2$, one can see that
$\{i_1,i_2,k+j_2\}\cap R_\ell'=\emptyset, \forall
\ell\in\{2,\cdots\!,t-1\}$. Since $|E|=t$, then there exists an
$\ell_0\in\{2,\cdots\!,t-1\}$ such that $E\cap
R_{\ell_0}'=\emptyset$. Hence, $R_{\ell_0}'\subseteq[n]\backslash
E$ is a recovering sets of $i_2$.

Case 2.3: $E\cap R_{\ell}\nsubseteq[k]$ for all
$\ell\in\{1,\cdots\!,t\!-\!1\}$. Then we have
$E\!\cap\!R_\ell\!=\!\{k\!+\!j_\ell\}$. Note that
$A_{j_1},\cdots\!, A_{j_{t-1}}$ belong to distinct parallel
classes (since all of them contain $i_1$). Without loss of
generality, we assume $A_{j_\ell}\!\in\!\mathcal A_\ell$,
$\ell\!\in\!\{1,\cdots\!,t\!-\!1\}$. Then $j_{1}\leq s$ and
$s<j_\ell\leq b$, $\ell=2,\cdots,t-1$. By 3) of Lemma
\ref{rem-code-rps}, $k\!+\!j_{1}$ has a recovering set
$R\!\subseteq\!\{k\!+\!1$, $\cdots$,
$k\!+\!s\}\!\cup\!\{k\!+\!b\!+\!1,\cdots\!,n\}\backslash\{j_1\}
\!\subseteq\![n]\backslash E$.

By the above discussion, for any $E\subseteq[n]$ of size $|E|\leq
t$, there exists an $i\in E$ such that $i$ has a recovering set
$R\subseteq[n]\backslash E$. So by Lemma \ref{lem-ELRC}, $\mathcal
C$ is an $(n,k,r,t)$-SLRC.
\end{proof}
\renewcommand\figurename{Fig}
\begin{figure}[htbp]
\begin{center}
\includegraphics[height=5cm]{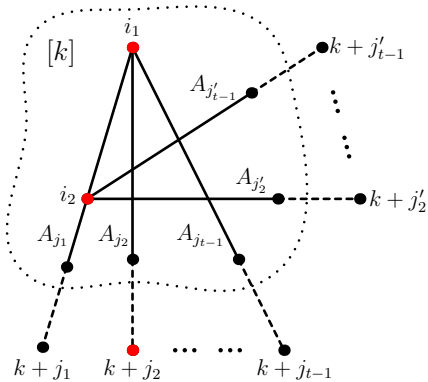}
\end{center}
\caption{Illustration of points and recovering sets:
$R_\ell=A_{j_\ell}\cup\{k+j_\ell\}\backslash\{i_1\}$,
$\ell=1,\cdots,t-1$, are $t-1$ recovering sets of $i_1$; $R_1$ and
$R'_\ell=A_{j_\ell'}\cup\{k+j_\ell'\}\backslash\{i_2\}$,
$\ell=2,\cdots,t-1$, are $t-1$ recovering sets of $i_2$.}
\label{fig-cnt-rp}
\end{figure}

\section{Conclusions and Future Work}
In this paper, we investigated sequential locally repairable codes
(SLRC) by proposing an upper bound on the code rate of $(n, k, r,
t)$-SLRC for $t=3$, and constructed two families of $(n, k, r,
t)$-SLRC for $r,t\geq 2$ (for the second family, $t$ is odd). Both
of our constructions have code rate $>\frac{r}{r+t}$ and are
optimal for $t\in\{2,3\}$ with respect to the proposed bound.

It is still an open problem to determine the optimal code rate of
$(n, k, r, t)$-SLRCs for general $t$, i.e., $t\geq 5$. Here, we
conjecture that an achievable upper bound of the code rate of $(n,
k, r, t)$-SLRCs has the following form:
\begin{align}\label{gnr-bnd-n}
\frac{k}{n}\leq\left(1+\sum_{i=1}^m\frac{a_i}{r^i}\right)^{-1},\end{align}
where $m=\left\lceil\log_rk\right\rceil$, all $a_i\geq 0$ are
integers such that $\sum_{i=1}^ma_i=t$. This conjecture can be
verified for $t\in\{1,2,3,4\}$, for which the values of the
$m$-tuple $(a_1,\cdots,a_m)$, denoted by $\alpha_t$ for each $t$,
are listed in the following table, where, the cases of $t=2,3$ are
due to \cite{Prakash-14} and this work, respectively. The case of
$t=4$ (for binary code) is recently due to Balaji et al
\cite{Balaji16-2}. \vspace{0.2cm}\begin{center}
\begin{tabular}{|p{1.0cm}|p{3.5cm}|}
\hline \small{$t$} & \small{$a_1 ~~~ a_2 ~~~ a_3 { ~~~} \cdots ~~~ a_m$}\\
\hline \small{$1$} & \small{$~1$ ~~ $0$ ~~ $0$ ~~~ $\cdots$ ~~ $0$}  \\
\hline \small{$2$} & \small{$~2$ ~~ $0$ ~~ $0$ ~~~ $\cdots$ ~~ $0$}  \\
\hline \small{$3$} & \small{$~2$ ~~ $1$ ~~ $0$ ~~~ $\cdots$ ~~ $0$} \\
\hline \small{$4$} & \small{$~2$ ~~ $2$ ~~ $0$ ~~~ $\cdots$ ~~ $0$} \\
\hline
\end{tabular}\\
\vspace{0.15cm}\footnotesize{Table 1. The known values of
$\alpha_t=(a_1,a_2,\cdots,a_m)$.}
\end{center}

It is very hard to give the explicit values of $\alpha_t$ for
general $t\geq 5$, even a LP-based or recursive formulation of
$\alpha_t$ seems difficult. Further, we conjecture that
$\alpha_5=(2,2,1,0,\cdots,0)$ and $\alpha_6=(2,3,1,0,\cdots,0)$.
We would like to take the above problems our future work.

\end{document}